\def\setof#1{\left\{{\let\st\colon #1 }\right\}}
  \def\calG{\mathcal{G}} 
 \def\calT{\mathcal{T}}
 \def\Reals#1{\mathbb{R}^{#1}}
 \def\BOO={=^{\hspace{0.06cm}\epsilon}_B}
  \def\calM{\mathcal{M}}
  \def\00{\mathbf{0}}
\def\aa{\mathbf{a}}   
\def\uu{\mathbf{u}} \def\vv{\mathbf{v}} \def\xx{\mathbf{x}} \def\yy{\mathbf{y}}
\def\AA{\mathbf{A}} \def\BB{\mathbf{B}}  
\def\pp{\mathbf{p}}   \def\ss{\mathbf{s}}
 \def\xx{\mathbf{x}} \def\yy{\mathbf{y}} 
 \def\ww{\mathbf{w}} \def\CC{\mathbf{C}} \def\DD{\mathbf{D}}
 \def\calG{\mathcal{G}}
 \def\calX{\mathcal{X}}  
\newtheorem{coro}{Corollary} \newtheorem{defi}{Definition} 
\newtheorem{lemm}{Lemma}  \newtheorem{theo}{Theorem}
\newtheorem{prope}{Property}  
 \def\opt{\text{\sc OPT}}
\begin{document}
\title{\LARGE Settling the Complexity of Arrow-Debreu Equilibria\vspace{0.05cm} \\
   in Markets with Additively Separable Utilities\vspace{1cm}}
\author{
    Xi Chen\hspace{0.05cm}\footnote{Department of Computer Science, Princeton University.}
    \and
    Decheng Dai\hspace{0.05cm}\footnote{Department of Computer Science,
      Tsinghua University. Work done while visiting Princeton University.
      This work was supported in part by the National Natural Science
      Foundation of China Grant 60553001, and the National Basic Research
      Program of China Grant 2007CB807900, 2007CB807901.}
    \and
    Ye Du\hspace{0.05cm}\footnote{Department of Electrical Engineering and Computer Science,
      University of Michigan.}
    \and
    Shang-Hua Teng\hspace{0.05cm}\footnote{Microsoft Research New England.
      Affiliation starting from Fall 2009:
      Department of Computer Science, University of Southern California.}
} \date{} \maketitle

\begin{abstract}
We prove that the problem of computing an Arrow-Debreu market
  equilibrium is PPAD-complete even when
  all traders use additively separable, piecewise-linear and concave utility functions.
In fact, our proof shows that this market-equilibrium problem does
  not have a fully polynomial-time appro\-ximation scheme unless
  every problem in PPAD is solvable in polynomial time.
\end{abstract}

\thispagestyle{empty}
\newpage
\setcounter{page}{1}

\section{Introduction}

One of the central developments in mathematical economics
  is the general equilibrium theory, which provides the foundation
  for competitive pricing \cite{AD,ScarfPrice}. 
When specialized to exchange economies, it con\-siders
  an {\em exchange market} in which there are $m$ traders and
  $n$ divisible goods, where
  trader $i$ has an {\em initial endowment} of
  $w_{i,j} \geq 0$ of good $j$ and a {\em utility function}
  $u_{i}: \Reals{n}_+ \rightarrow \Reals{}$.
The {\em individual goal} of trader $i$ is to
  obtain a new bundle of goods that maximizes her utility.
This new bundle can be specified by a column vector
   $\xx_{i} \in \Reals{n}_+$, where the $j^{th}$ entry $x_{i,j}$
   is the amount of good $j$ that trader $i$ is able to obtain after
   the exchange.
Naturally, the exchange should satisfy
  {$\sum_{i} x_{i,j} \leq \sum_{i} w_{i,j},$} for all good $j$.\vspace{0.02cm}

The pioneering equilibrium theorem of Arrow and Debreu \cite{AD}
  states that if all the utility functions $u_1,...,u_m$ are quasi-concave, then
  under some mild conditions, the market has an {\em equilibrium price}
  $\pp = (p_{1},..., p_{m})$:
At this price, independently, each trader can sell her endowment virtually
  to the market to obtain a budget and then
  buys a bundle of goods with this budget from the market --- which
  contains the union of all goods --- that maximizes her utility.
The equilibrium condition guarantees that
  the supply equals the demand and hence the market
  clears: Every good is sold and every trader's budget is completely
  spent.
In the case when the utility functions are strictly concave,
  there is a unique optimal bundle of goods for each trader at any
  given price $\pp$.
Nevertheless, the theorem extends to quasi-concave
  utility functions such as linear or~piece\-wise linear utility
  functions \cite{Maxfield, Geana},
even though they are not
  strictly quasi-concave, and there could be multiple
  optimal bundles of goods for each trader at a given price.\vspace{0.01cm}

The existence proof of Arrow and Debreu \cite{AD},
  based on Kakutani's fixed point theorem \cite{KAKU},
  is~non-constructive in the view of
  polynomial-time computability.
Despite the progress both on algorithms~for and on the
complexity-theoretic
  understanding of market equilibria,
 several fundamental questions con\-cern\-ing market equilibria,
 including some seemingly simple ones, remain unsettled.\vspace{0.02cm}

Vijay Vazirani \cite{AGTBook} wrote:\vspace{0.03cm}
\begin{quote}
{\em ``Concave utility functions, even if they are additively
separable over the goods, are not easy \\to deal with algorithmically.
In fact, obtaining a polynomial time algorithm for such functions\\ is
a premier open question today.''}\vspace{0.03cm}
\end{quote}
A function $u (x_{1},...,x_n)$ from $\mathbb{R}_+^n$ to $\mathbb{R}$
  is an {\em additively separable} and
  {\em concave} {function}  if there exist real-valued concave functions
  $f_{1},...,f_{n}$ such that
$$u (x_{1},...,x_{n}) = \sum_{j=1}^n f_j(x_j).$$
Noting that every concave function can be approximated by a
  piecewise linear and concave (PLC) function,
  Vazirani \cite{AGTBook} further
  asked whether one can compute a market
  equilibrium with additively separable PLC utilities
  in polynomial time; or whether the problem is PPAD-hard.
This open question has been echoed in several work
  since then \cite{Nikhil,HuangTeng,AuctionVazirani,VaziManu}.

\subsection{Our Contribution}\label{}

In this paper, we settle the complexity of finding
  an Arrow-Debreu equilibrium in an exchange market with addi\-tively separable PLC utilities.
We show that this equilibrium problem is PPAD-complete.\vspace{0.01cm}

For an integer $t > 0$, a real-valued function $f(\cdot)$ is
$t$-segment
  piecewise linear over $\mathbb{R}_+ =[0,+\infty)$ if $f$ is continuous
  and $\mathbb{R}_+$ can be divided
  into $t$ sub-intervals such that $f$ is a linear function over every
  sub-interval.
If each trader's utility  is an additively separable $t$-segment PLC
  function, then we refer to the market as a {\em $t$-linear market}.
Clearly, a market with linear utilities is a $1$-linear market. In
contrast to the fact that an Arrow-Debreu market equilibrium of a
  $1$-linear market can be found in polynomial time
  \cite{EG,Nenakov,DengSafra,DevanurPapa,JainAD}, we show that
  even computing an Arrow-Debreu equilibrium in a $2$-linear
   market is PPAD-complete, via a reduction from
   {\sc Sparse Bimatrix} \cite{ChenDengTengSparse}:
   the problem of finding an approximate Nash equilibrium
   in a sparse two-player game (see Section \ref{sparsedefinition} for
   the definition).\vspace{0.01cm}

Our construction of the PPAD-complete markets has several
  nice technical elements.
First we introduce a sequence of simple \emph{linear} markets
  $\{\calM_n\}$ with $n$ goods,
   which we refer to as the \emph{price-regulating} markets.
$\calM_n$ has the following nice \emph{price-regulation property}:
If $\pp$ is a \emph{normalized}\hspace{0.05cm}\footnote{We say a price
vector $\pp$  is normalized if the smallest nonzero entry of $\pp$ is
  equal to $1$.} approximate equilibrium price vector of $\calM_n$,
  then $p_k\in [1,2]$ for all $k\in [n]$.
This price-regulation property
   allows us to encode $n$ free variables $x_1,...,x_n$ between $0$ and $1$ using
  the $n$ entries of $\pp$ by setting $x_k=p_k-1$.\vspace{0.015cm}

As a key step in our analysis, we show that the price-regulation property is
\emph{stable} with respect to
  ``\emph{small perturbations}'' to $\calM_n$:
  When new traders are added to $\calM_{n}$ (without introducing
  new goods),
  this property remains hold as long as the amount of goods these
  traders carry with them is small
  compared to those of the traders in $\calM_n$.
We then show how to set the initial endowments and utility functions of new traders
  so that we can control the flows of goods in the market and
  set new requirements that every approximate equilibrium price vector
  $\pp$ has to satisfy.\vspace{0.018cm}

Using the stability of the price-regulating market, we give a reduction from
  a two-player game to~an exchange market $\calM$:
Given an $n\times n$ two-player game $(\AA,\BB)$, we construct
  an additively separable PLC market by adding new traders
  --- whose initial
  endowments are relatively small --- to $\calM_{2n+2}$, the price-regulating market
  with $2n+2$ goods.
We use the first $2n$ entries of $\pp$ to encode a pair of
probability vectors
  $(\xx,\yy)$: $x_k=p_k-1$ and $y_k=p_{n+k}-1$, $k\in [n]$.
We then develop a novel way to enforce the Nash equilibrium
  constraints over $\AA,\hspace{0.04cm}\BB,\hspace{0.04cm}\xx$ and
  $\yy$  by carefully specifying the behaviors of the new traders.
In doing so, we construct a market $\calM$ with the property that
  from every approximate market
  equilibrium $\pp$ of $\calM$, the pair $(\xx,\yy)$ obtained above
  (after normalization)
  is an approximate Nash equilibrium of $(\AA,\BB)$.
Moreover, if $(\AA,\BB )$ is a sparse two-player game, then
  the relation of which traders are interested in which goods in
  $\calM$ is also sparse (see Section \ref{lulala} for details).\vspace{0.01cm}

In the construction of $\calM$, the price-regulation property
   plays a critical role.
It enables  us to design the utility functions of new traders
  so that we know exactly their preferences over the goods with respect
  to any approximate equilibrium price $\pp$, even though we have no
  idea in advance about the entries of $\pp$ when constructing $\calM$.\vspace{0.014cm}

We anticipate that our reduction techniques will help to resolve
  more complexity-theoretic questions concerning other families of exchange markets
  such as the general CES markets
  and the hybrid linear-Leontief mar\-kets \cite{ChenHuangTengWine}.\newpage

\subsection{Related Work}\label{}

The computation of a market equilibrium price in an exchange market
  has been a challenging problem in mathematical economics \cite{Scarf,AGTBook}.
The matter is more complex because some markets only have irrational~eq\-uilibria,
  making the computation of exact
  equilibria with a finite-precision algorithm impossible.
One alternative approach to handle irrationality is to express
  equilibria in some simple algebraic form.
However, it turns out that finding an exact
  market equilibrium in general is not computable \cite{WongRichter}.\vspace{0.015cm}

To circumvent the irrationality, one usually uses some notion of
approximate
   market equilibria.
There are various notions of approximate equilibria: Some require
that the approximation solution is within a small
  geometric distance from an exact equilibrium, while
  others only require that the supply-demand condition and/or
  the individual optimality condition are approximately satisfied.
In this paper, following Scarf \cite{Scarf},
   we consider the latter notion of approximate market equilibria.

\subsubsection{Algorithms for Market Equilibria}\label{}

Scarf pioneered the algorithmic development for computing general
competitive
  equilibria \cite{Scarf}.
His approach combined numerical approximation with combinatorial
  insights used in Sperner's lemma \cite{SPE28} for fixed points
  and in Lemke and Howson's algorithm for two-player games.
Although his algorithm may not always run in polynomial time,
  Scarf's work has profound impact to computational economics.\vspace{0.015cm}

Building on the success of convex programming \cite{EG},
  polynomial-time algorithms have been developed for special markets
  whose sets of equilibria enjoy some degree of convexity.
For Arrow-Debreu markets with~linear utility functions, Nenakov and Primak
  gave a polynomial-time algorithm \cite{Nenakov},
  and there are now several polynomial-time algorithms for computing or
  approximating market equilibria with linear utility functions
  \cite{DengSafra,DevanurPapa,JainAD,KapoorGarg,JainMahdianSaberi,DeVazirani,YeArrow}.
Other polynomial-time algorithms for special
  markets include Eaves's algorithm for Cobb-Douglas markets
  \cite{Eaves} and Devanur and Vazirani's algorithm for markets~with spending
   constraint utilities~\cite{SpendingConstraint}
   (also see \cite{VaziManu}).
The convex programming based approach for approximating equilibria has
  been extended to all markets whose utilities satisfy
  weak gross substitutability (WGS) by Codenotti, Pemmaraju, and Varadarajan \cite{GeneralWGS}.
In \cite{CodeMcCune}, Codenotti, McCune, and Varadarajan showed that
  for markets that satisfy WGS, there is a price-adjustment mechanism
  called \emph{t$\hat{a}$onnement} that converges to an
  approximate equilibrium efficiently.\vspace{0.02cm}

A closely related market model is
   Fisher's model \cite{BandScarf}.
In this model, there are two different types~of traders in the market: {\em producers} and
  {\em consumers}.
Each consumer comes to the market with a budget and
  a utility~fun\-ction.
Each producer comes to the market with an endowment of goods, and
  will sell them to the con\-sumers for money.
A market equilibrium is then a price vector $\pp$ for goods so that if each
 consumer spends all her budget to maximize
 her utility, then the market clears.
An (approximate) market equilibrium in a Fisher's market with
  CES\hspace{0.01cm}\footnote{CES (standing for constant
 elasticity of substitution) is a  popular family of utility functions.
Let $s>0$ be a parameter called the \emph{elasticity of substitution},
  then a CES function with elasticity of substitution $s$ has the following form:
\[
 u (x_{1},...,x_{m})
   = \left(\sum_{j=1}^{m} d_{j}x_{j}^{r } \right)^{1/r },\ \ \ \ \text{where $r=
   \frac{s-1}{s}$}.
 \]
 } utility functions \cite{EG,YeArrow,YeRationality,DevanurPapa,JainVaziraniYe}
  or with piecewise\newpage \noindent linear utility functions \cite{YeRationality}
  can be found in polynomial time.
In \cite{CDSY}, Chen, Deng, Sun, and Yao gave an algorithm for markets
  with logarithmic utility functions.
Its running time is polynomial when either the number of sellers
  or the number of buyers is bounded by a constant.\vspace{0.018cm}

However, progress on Arrow-Debreu markets whose sets of equilibria do not enjoy
 convexity has been relatively slow.
There are only a few algorithms in this category. Devanur and Kannan
\cite{Nikhil} gave~a~poly\-nomial-time algorithm for exchange markets
  with PLC utility functions and a constant number of goods.
Codenotti,  McCune, Penumatcha, and Varadarajan
gave a polynomial-time
  algorithm for CES markets when the elasticity of substitution $s\ge 1/2$ \cite{Codenotti}.

\subsubsection{The Complexity of Equilibrium Problems}\label{}

Papadimitriou initiated the complexity-theoretic study of
  fixed-point computations \cite{PAP94}.
He introduced the complexity class PPAD, and proved that
  the problem of finding a Nash equilibrium
  in a two-player game, the computational version of Sperner's Lemma, and
  the problem of computing an approximate fixed point
  are members of PPAD.\vspace{0.01cm}

Recently, there was a series of developments that characterized
  the computational complexity of se\-veral equilibrium problems in game
  theory.
Daskalakis, Goldberg, and Papadimitriou \cite{GOL05} proved
that
  the problem of computing an exponentially-precise Nash equilibrium
  of a four-player game is PPAD-complete.
Chen and Deng \cite{CHE06} then proved that the problem of
  computing a two-player Nash equilibrium is also PPAD-complete.
Chen and Deng's result, together with an earlier reduction of
\cite{CSVY},
  implies that computing a market equilibrium in an Arrow-Debreu market
  with Leontief utilities\hspace{0.05cm}\footnote{Leontief
    functions are special cases of CES functions with $s$ approaching 0.
A Leontief function has the following form:
$u(x_{1},...,x_{m}) = \min_{j\in S} x_{j}/d_{j}$, where $S\subseteq [m]$ is a subset of goods
  and $d_j>0$ for all $j\in S$.} is PPAD-hard.~On
the approximation front, Chen, Deng and Teng \cite{ChenDengTengFOCS} proved
  that it is PPAD-complete to find a  polyno\-mially-precise
  approximate Nash equilibrium in two-player or multi-player games.
Huang and Teng \cite{HuangTeng} then extended this approximation
  result to Leontief market equilibria.
Their approximation result also implies that the
  market equilibrium problem with CES utility
  functions is PPAD-hard,
  if the elasticity of substitution $s$ is sufficiently small.

\section{Preliminaries}

\subsection{Complexity of Nash Equilibria in Sparse Two-Player Games}\label{sparsedefinition}

A two-player game is defined by the payoff matrices $(\AA,\BB)$ of its
  two players.
In this  paper, we assume that both players have $n$ choices of actions
  and hence both $\AA$ and $\BB$ are square matrices  with $n$ rows and columns.
We use $\Delta^n\subset \mathbb{R}^n$ to denote the set of probability distributions
  of $n$ dimensions.
A pair of probability vectors $(\xx,\yy)$ (i.e., $\xx\in \Delta^n $
  and $\yy\in \Delta^n$)
  is a Nash equilibrium of $(\AA,\BB)$ if for all $i$
  and $j$ in $[n]=\{1,2,...,n\}$,
$\AA_i\yy^T <\AA_j\yy^T\Longrightarrow x_i=0$ and
$\xx\BB_i <\xx\BB_j\Longrightarrow y_i=0,$
where we use $\AA_i$ and $\BB_i$ to denote the $i$th row vector of $\AA$
  and the $i$th column vector of $\BB$, respectively.
We will use the following notion of approximate Nash equilibria.\newpage

\begin{defi}[Well-Supported Nash Equilibria]\label{hohoho}
For $\epsilon>0$,  $(\xx,\yy)$ is an
  $\epsilon$-well-supported Nash equilibrium of $(\AA,\BB)$,
  if $\xx,\yy\in \Delta^n$ and for all $i,j\in [n]$,\vspace{-0.1cm}
\begin{eqnarray}\label{nasheq1}
&\AA_i\yy^T+\epsilon<\AA_j\yy^T\ \ \Longrightarrow\ \ x_i=0,& \text{and}\\
\label{nasheq2}
&\xx\BB_i+\epsilon<\xx\BB_j \ \ \Longrightarrow\ \ y_i=0.
\end{eqnarray}
\end{defi}

\begin{defi}[Sparse Normalized Two-Player Games]\label{sparsehuhu}
A two-player game $(\AA,\BB)$ is \emph{normalized} if every entry of
  $\AA$ and $\BB$ is between $-1$ and $1$.
We say a two-player game $(\AA,\BB)$ is \emph{sparse} if
  every row and every column of $\AA$ and $\BB$ have at most $10$ nonzero entries.\vspace{0.06cm}
\end{defi}

Let {\sc Sparse Bimatrix} denote the problem of finding an
$n^{-6}$-well-supported
  Nash equilibrium in an $n\times n$ sparse normalized
  two-player game, then we have\vspace{0.06cm}

\begin{theo}[Chen-Deng-Teng~\cite{ChenDengTengSparse}]
{\sc Sparse Bimatrix} is \emph{PPAD}-complete.\vspace{0.06cm}
\end{theo}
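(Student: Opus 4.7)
The plan is to reduce from a problem already known to be PPAD-hard at polynomial precision with built-in locality, namely the problem of computing an $\epsilon$-well-supported Nash equilibrium in a polymatrix (or graphical) game whose interaction graph has constant degree. The PPAD-hardness of this base problem is available through the Daskalakis--Goldberg--Papadimitriou chain of reductions from approximate Brouwer fixed points: the arithmetic-circuit gadgets used there are all $O(1)$-local, so after converting them into graphical-game form the interaction graph has constant degree, and the approximation error degrades only polynomially. This gives a starting point in which every player's payoff depends on only constantly many other players, with precision $N^{-c}$ for a constant $c$ that we are free to choose large.

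The reduction to a bimatrix game uses the well-known ``lawyer'' simulation. Given an $N$-player polymatrix game with constant degree $d$ and constant action set size $k$, construct an $(Nk)\times (Nk)$ bimatrix game $(\AA,\BB)$ whose pure strategies are indexed by pairs (player, action). For $\AA$, set the entry at row $(i,a)$ and column $(j,b)$ to be the $(a,b)$-entry of the local polymatrix payoff matrix between players $i$ and $j$ when $i$ and $j$ are neighbors in the interaction graph, and zero otherwise; the column player's payoff $\BB$ is defined symmetrically but augmented with a large ``matching'' term that is positive exactly when $(i,a)=(j,b)$. The matching term forces the two marginals on the player-coordinate to agree in any approximate equilibrium, so the strategies of the row and column player can both be decoded to a common mixed strategy profile of the original polymatrix game.

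Sparsity of $(\AA,\BB)$ is almost automatic. Every row $(i,a)$ of $\AA$ has nonzero entries only for columns indexed by a neighbor $j$ of $i$ in the interaction graph, so at most $dk$ nonzeros; the same bound holds for columns. The matching block in $\BB$ is diagonal in the pair index, contributing at most one nonzero per row and column. Choosing $d$ and $k$ small enough (e.g.\ $d\le 4$ and $k\le 2$, which is enough for PPAD-hardness of the base problem) comfortably meets the ``at most $10$ nonzero entries'' requirement. Normalizing entries into $[-1,1]$ preserves both sparsity and the well-supportedness error up to a constant factor, so the final game fits the definition of a sparse normalized two-player game.

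The main obstacle I expect is the approximation bookkeeping. To produce an $n^{-6}$-well-supported equilibrium for the sparse bimatrix game (where $n=Nk$), two error sources must be controlled simultaneously: the matching weight in $\BB$ must be large enough that near-optimal column strategies are forced to align with row strategies up to error much smaller than $n^{-6}$, yet not so large that it drowns out the polymatrix payoff signal that must remain visible at precision $n^{-6}$. Verifying that an $n^{-6}$-well-supported equilibrium of $(\AA,\BB)$ projects to an $N^{-c}$-well-supported equilibrium of the polymatrix game, for a suitable $c$, is the delicate accounting step. Starting the reduction chain from a sufficiently fine approximation of Brouwer, together with a careful choice of the matching weight, handles this; once that is in place, the rest of the argument is essentially syntactic.
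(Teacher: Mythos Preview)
The paper does not prove this theorem at all: it is quoted as a known result from \cite{ChenDengTengSparse} and used as the source problem for the paper's own reduction to {\sc Sparse Market}. So there is no ``paper's own proof'' to compare against here; any correct argument you give would be supplying a proof the paper deliberately omits.

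As for the proposal itself, the high-level strategy---start from a PPAD-hard constant-degree graphical/polymatrix game and push it through a lawyer-style simulation so that locality becomes row/column sparsity in the resulting bimatrix game---is the right intuition, and is indeed close in spirit to how \cite{ChenDengTengSparse} proceeds. However, the concrete gadget you describe has a gap. You place the matching bonus in $\BB$ on the event $(i,a)=(j,b)$, i.e.\ on the full (player, action) pair. That forces the column player to mimic the row player's \emph{entire} mixed strategy, not merely the marginal over player indices; once the two mixed strategies coincide, the polymatrix part of $\AA$ evaluated at a diagonal pair $((i,a),(i,a))$ is zero (since $i$ is not its own neighbor), so the row player has no incentive tied to the underlying game and nothing pins down a Nash equilibrium of the polymatrix instance. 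The standard construction instead puts the matching/evasion term on the player index alone and couples it with an opposite-sign term for the row player, so that in any approximate equilibrium the player-marginals are driven close to uniform while the action-conditionals carry the equilibrium information. Getting that to work at precision $n^{-6}$, with the matching weight large enough to enforce near-uniform marginals yet small enough to stay normalized and not swamp the polymatrix signal, is exactly the ``approximation bookkeeping'' you flag---and it is genuinely the hard part, not a detail that can be waved through. If you want to carry this out, you should redo the gadget with index-only matching and track the error quantitatively, or simply cite \cite{ChenDengTengSparse} as the present paper does.
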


\subsection{Markets with Additively Separable PLC Utilities}\label{mainresultsec}

Let $\calG=\{G_1,..., G_n\}$ denote a set of $n$ divisible goods,
  and $\calT=\{T_1,...,T_m\}$ denote a set of $m$ traders.
For each trader $T_i\in \calT$, we use $\ww_i\in \mathbb{R}_+^n$
  to denote her initial endowment, $u_i:\mathbb{R}_+^n\rightarrow
  \mathbb{R}_+$ to denote her utility function,
  and $\xx_i\in \mathbb{R}_+^N$ to denote her allocation vector.
In this paper, we will focus on markets with additively
  separable piecewise linear and concave utilities.\vspace{0.08cm}

\begin{defi}
A function $r(\cdot):\mathbb{R}_+\rightarrow \mathbb{R}_+$ is said to be
  \emph{$t$-segment}
  \emph{piecewise linear and concave} \emph{(PLC)} if
\begin{enumerate}
\item $r(0)=0$ and $r(\cdot)$ is continuous over $\mathbb{R}_+$;\vspace{-0.08cm}
\item there exists a tuple of length $2t+1$, $\big[\theta_0>\theta_1>...>\theta_t;
  a_1<a_2<...<a_t\big]\in \mathbb{R}_+^{2t+1},$ such that\vspace{-0.08cm}
\begin{enumerate}
\item  for any $i\in [0:t-1]$, the restriction of $f$ over $[a_i,a_{i+1}]$
  \emph{(}$a_0=0$\emph{)} is a segment of slope $\theta_i$;
\item the restriction of $f$ over $[a_t,+\infty)$ is a ray of slope $\theta_t$.
\end{enumerate}
\end{enumerate}
The $2t+1$-tuple $[\theta_0,\theta_1,...,\theta_t;a_1,a_2,...,a_t]$ is also called
  the \emph{representation} of $r(\cdot)$.
Moreover, we say $r(\cdot)$ is \emph{strictly monotone} if $\theta_t>0$, and
  is $\alpha$-\emph{bounded}
  for some $\alpha\ge 1$ if\vspace{-0.1cm}
$$\alpha\ge \theta_0>\theta_1>...>\theta_t\ge 1.$$
\end{defi}

\begin{defi}
A utility function $u(\cdot):\mathbb{R}_+^n\rightarrow \mathbb{R}_+$
  is said to be an \emph{additively separable PLC} function if there exist a set of
  $n$ \emph{PLC} functions $r_{1}(\cdot),...,r_{n}(\cdot):\mathbb{R}_+\rightarrow
  \mathbb{R}_+$ such that
\begin{equation}\label{additive}
u(\xx)=\sum_{j\in [n]} r_{j}(x_j),\ \ \ \text{for all $\xx\in \mathbb{R}_+^n$.}
\end{equation}
\end{defi}

In such a market, we use, for each trader $T_i\in \calT$,
  $r_{i,j}(\cdot):\mathbb{R}_+\rightarrow \mathbb{R}_+$ to denote her
  PLC function with respect to good $G_j\in \calG$.
In another word, we have
$$
u_i(\xx)=\sum_{j\in [n]} r_{i,j}(x_j),\ \ \ \text{for all $\xx\in \mathbb{R}_+^n$.}
$$

We use $\pp\in \mathbb{R}_+^n$ to denote a price vector, where $\pp\ne \mathbf{0}$ and
  $p_j$ is the price of $G_j$, $j\in [n]$.
We always assume that $\pp$ is \emph{normalized}, that is,
  the smallest nonzero entry of $\pp$ is equal to $1$.\vspace{0.015cm}

Given $\pp$, we use $\opt(i,\pp)\subset\mathbb{R}_+^n$ to denote the
  set of allocations that maximize the utility of $T_i$:
\begin{equation*}
\opt(i,\pp)=\text{argmax}_{\hspace{0.08cm}\xx\in \mathbb{R}_+^n,\
  \xx\cdot \pp\le \ww_i\cdot \pp}\  u_i(\xx).
\end{equation*}
We use $\mathcal{X}=\{\xx_i\in \mathbb{R}_+^n:i\in [m]\}$ to denote an allocation of the market:
  For each trader $T_i\in \calT$,
  $\xx_i \in \mathbb{R}_+^n$ is the amount of goods that $T_i$ receives.
In particular, the amount of $G_j$ that $T_i$ receives in $\mathcal{X}$ is
  $x_{i,j}$.\vspace{0.06cm}

\begin{defi}[Arrow-Debreu~\cite{AD}]
A market equilibrium is a \emph{(normalized)} price vector $\pp\in \mathbb{R}_+^n$ such that
  there exists an allocation $\calX$ which has the following properties:
\begin{enumerate}
\item The market clears: For every good $G_j\in \calG$,
\begin{equation}\label{marketclears1}
\sum_{i\in [m]} x_{i,j}\le \sum_{i\in [m]} w_{i,j}.
\end{equation}
In particular, if $p_j>0$, then\vspace{-0.15cm}
\begin{equation}\label{marketclears2}
\sum_{i\in [m]} x_{i,j}= \sum_{i\in [m]} w_{i,j}.
\end{equation}
\item Every trader gets an optimal bundle: For every $T_i\in \calT$,
  we have $\xx_i\in \text{\sc OPT}(i,\pp).$\vspace{0.08cm}
\end{enumerate}
\end{defi}

In general, not every market has such an equilibrium price vector.
For the additively separable PLC markets considered here, the
  following condition guarantees the existence of an equilibrium.\vspace{0.08cm}

\begin{defi}[Economy Graphs]
Given an additively separable \emph{PLC} market, we build a directed graph
  $G=(\calT,E)$ as follows.
The vertex set of $G$ is exactly $\calT$, the set of traders in the market. For every
  two traders $T_i\ne T_j\in \calT$, we have an edge from $T_i$ to
  $T_j$ if there exists an integer $k\in [n]$ such that $w_{i,k}>0$ and
  $r_{j,k}(\cdot)$ is strictly monotone.
In another word, $T_i$ possesses a good which $T_j$ wants.
$G$ is called the \emph{economy graph} of the market \emph{\cite{Maxfield,Codenotti}}.
We say the market is \emph{strongly connected} if $G$ is strongly connected.\vspace{0.08cm}
\end{defi}

The following theorem is a corollary of Maxfield \cite{Maxfield},
  and the proof can be found in Appendix~\ref{appendix1}.\vspace{0.08cm}

\begin{theo}\label{existence}
Let $\calM$ be a market with additively separable \emph{PLC} utilities.
If it is strongly connected, then
  a market equilibrium $\pp$ exists. Moreover, if all the parameters
  of $\calM$ are rational numbers, then it has a rational
  market equilibrium $\pp$.
The number of bits we need to describe $\pp$
  is polynomial in the input size of $\calM$ \emph{(}that is, the number of bits we need
  to describe the market $\calM$\emph{)}.\vspace{0.06cm}
\end{theo}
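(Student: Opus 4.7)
The plan is to split the theorem into existence, which reduces to Maxfield~\cite{Maxfield}, and rationality with polynomial bit length, which needs a guess-and-verify argument on the structure of the optimal bundles. For existence, I would verify that additively separable PLC utilities are continuous, concave, and weakly monotone, so the hypotheses of~\cite{Maxfield} apply, and that strong connectivity of the economy graph $G$ is precisely the indecomposability condition ruling out equilibria where some subset of goods has price zero: if a nonempty $S\subsetneq \calG$ had $p_j=0$ for $G_j\in S$, strong connectivity would force some trader outside $S$ to hold positive value, hence positive budget, and to demand a good in $S$, violating market clearing on $S$. Invoking~\cite{Maxfield} then yields a (possibly irrational) equilibrium price $\pp^\ast\in \mathbb{R}_{++}^n$ with a supporting allocation $\calX^\ast$.

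For the rationality claim under rational data, I would guess the combinatorial skeleton of $(\pp^\ast,\calX^\ast)$. By KKT for trader $T_i$'s utility-maximization problem, there is a multiplier $\lambda_i^\ast>0$ such that, for every good $G_j$, either $x_{i,j}^\ast$ lies strictly inside a single linear segment of $r_{i,j}$ of slope $\theta$ with $\theta=\lambda_i^\ast p_j^\ast$ (``interior''), or $x_{i,j}^\ast$ coincides with a breakpoint $a$ of $r_{i,j}$ whose adjacent slopes $\theta^-\ge \theta^+$ satisfy $\theta^+\le \lambda_i^\ast p_j^\ast\le \theta^-$ (``boundary''). Recording this interior/boundary flag and the associated segment index for every pair $(i,j)$ defines the \emph{type} $\tau$ of the equilibrium; the set of candidate types is finite, bounded by the product of segment counts over all $(i,j)$, and at least one such type $\tau^\ast$ is realised by $(\pp^\ast,\calX^\ast)$.

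Fixing this $\tau^\ast$, I would write the polyhedron $P_{\tau^\ast}$ in variables $(\pp,\xx,\boldsymbol{\lambda})\in \mathbb{R}_+^n\times \mathbb{R}_+^{mn}\times \mathbb{R}_+^m$ cut out by: market clearing $\sum_i x_{i,j}=\sum_i w_{i,j}$ for each $j$; the KKT equalities and flanking inequalities prescribed by $\tau^\ast$; segment-range bounds on each interior $x_{i,j}$; and the normalization $p_{j_0}=1$ for some $j_0$ realising the minimum nonzero price of $\pp^\ast$. To eliminate the bilinear terms $\lambda_i p_j$ and $p_j x_{i,j}$, I would substitute spending variables $y_{i,j}=p_j x_{i,j}$ and inverse multipliers $\mu_i=1/\lambda_i$, so that interior constraints become linear equalities $p_j=\theta\hspace{0.04cm}\mu_i$, boundary constraints become $y_{i,j}=a\hspace{0.04cm}p_j$, and the budget identity $\sum_j y_{i,j}=\sum_j p_j w_{i,j}$ is already linear. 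The reformulated polyhedron in $(\pp,\yy,\boldsymbol{\mu})$ is nonempty and rational with description of bit size polynomial in the input size of $\calM$; any vertex has rational coordinates whose bit length is polynomially bounded by the description, so inverting the substitution yields a rational market equilibrium of the required complexity. The step I expect to be the main obstacle is verifying that the reformulated polyhedron is actually bounded so vertices exist: boundedness of $\yy$ follows from endowment caps $0\le y_{i,j}\le \sum_{i'} p_{j'}w_{i',j}$, but boundedness of $\pp$ and $\boldsymbol{\mu}$ must be propagated through the economy graph from the normalization $p_{j_0}=1$ using the equalities $p_j=\theta\hspace{0.04cm}\mu_i$, which is exactly where strong connectivity is used a second time and which needs to be spelled out carefully.
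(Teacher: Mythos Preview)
Your existence argument is essentially the paper's: both invoke Maxfield to get a (quasi-)equilibrium and then use strong connectivity of the economy graph to rule out zero prices/incomes. The paper makes the intermediate notion of quasi-equilibrium explicit and argues by a path in the economy graph that every trader must have positive income; your version collapses this into an indecomposability argument on the set of zero-priced goods, which is the same idea.

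For rationality, the paper does \emph{not} carry out the argument itself: it simply cites Devanur and Kannan~\cite{Nikhil}, who partition price space into finitely many ``cells'' and show that the equilibria in each cell are exactly the feasible points of a rational LP of polynomial size; the existence part then guarantees some cell's LP is feasible, hence has a rational solution of polynomial bit length. Your KKT ``type'' decomposition together with the substitution $y_{i,j}=p_jx_{i,j}$, $\mu_i=1/\lambda_i$ is precisely a hands-on reconstruction of this cell/LP structure, and the linearization is correct: every constraint (market clearing, budgets, interior equalities $p_j=\theta\mu_i$, breakpoint bounds $\theta^+\mu_i\le p_j\le\theta^-\mu_i$, segment ranges $a_\ell p_j\le y_{i,j}\le a_{\ell+1}p_j$) is indeed linear in $(\pp,\yy,\boldsymbol{\mu})$. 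So the two routes coincide in spirit; yours is self-contained, the paper's is a one-line citation.

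One correction to your self-diagnosis: boundedness is not the obstacle. A nonempty polyhedron $\{z:Az\le b\}$ with rational $(A,b)$ always has a rational point of bit length polynomial in the description of $(A,b)$, whether or not it is bounded or even pointed; this is standard LP theory. What you do need to check is that at the rational point you extract one still has $p_j>0$ for all $j$ (so that $x_{i,j}=y_{i,j}/p_j$ is well-defined and the type constraints still encode optimality). This is where strong connectivity enters a second time, and it is handled in~\cite{Nikhil} by the way the cells are cut; in your setup you would propagate positivity from $p_{j_0}=1$ through the chain of equalities/inequalities $\theta^+\mu_i\le p_j\le\theta^-\mu_i$ determined by the type, using that along any path in the economy graph the relevant slopes are strictly positive.
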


\subsection{Definition of the Sparse Market Equilibrium Problem}\label{lulala}

By Theorem \ref{existence}, the following search problem {\sc Market} is well defined:
\begin{quote}
The input of the problem is an additively separable PLC market $\calM$
  that is both rational\\ and strongly connected; and the output is a
  rational market equilibrium $\pp$ of $\calM$.
\end{quote}
In the rest of the section, we define a much more restricted version of {\sc Market}:
  {\sc Sparse Market}.
The main result of the paper is that {\sc Sparse Market} is PPAD-complete.

First of all, the input of {\sc Sparse Market} is an additively separable
  PLC market which not only is strongly connected,
  but also satisfies the following three conditions:\vspace{0.07cm}

\begin{defi}[$\alpha$-Bounded Markets]
We say an additively separable \emph{PLC} market $\calM$ is \emph{$\alpha$-bounded},
  for some $\alpha\ge 1$, if for all $T_i$ and $G_j$,
  $r_{i,j}(\cdot)$ is either the zero function \emph{(}$r_{i,j}(x)=0$ for all
  $x$\emph{)} or $\alpha$-bounded.\vspace{0.07cm}
\end{defi}

\begin{defi}[$2$-Linear Markets]
We call an additively separable \emph{PLC} market $\calM$
  a \emph{$2$-linear} market if for all
  $T_i\in \calT$ and $G_j\in \calG$,
  $r_{i,j}(\cdot)$ has at most two segments.\vspace{0.07cm}
\end{defi}

\begin{defi}[$t$-Sparse Markets]
We say an additively separable \emph{PLC} market $\calM$ is
  $t$-sparse for some integer $t>0$ if \emph{1)}
  For every $T_i\in \calT$, we have $|\hspace{0.04cm}
  {\text{\emph{supp}}}(\ww_i)|\le t;$ and \emph{2)}
For every $T_i\in \calT$, the number of $j\in [n]$ such that
  $r_{i,j}(\cdot)$ is not the zero function is at most $t$. In
  another word, every trader owns at most
  $t$ goods at the beginning and is interested in at most $t$ goods.\vspace{0.04cm}
\end{defi}

We use the following definition of approximate market equilibria:\vspace{0.04cm}

\begin{defi}[$\epsilon$-Approximate Market Equilibrium]
Given an additively separable \emph{PLC} market $\calM$, we say $\pp$ is
  an $\epsilon$-approximate market equilibrium of $\calM$,
  for some $\epsilon\ge 0$, if there is an allocation $\calX=\{\xx_i\in
  \mathbb{R}_+^n:i\in [m]\}$ such that every trader gets an optimal bundle
  with respect to $\pp$:
$\xx_i\in \opt(i,\pp)$ for all $i \in$ $ [m]$;
and the market clears approximately: For every $G_j\in \calG$,
\begin{equation}\label{marketclears2}
\left|\sum_{i\in [m]} x_{i,j}-\sum_{i\in [m]} w_{i,j}\right|
\le \epsilon\cdot \sum_{i\in [m]} w_{i,j}.\vspace{0.04cm}
\end{equation}
\end{defi}

We remark that there are various notions of approximate market equilibria.
The reason we adopted the one above
  is to simplify the analysis.
The construction in Section \ref{allconstruction} actually works for some other notions
  of approximate equilibria,
  e.g., the one that also allows the allocation to be just approximately optimal for
  each trader.

Finally, we let {\sc Sparse Market} denote the following search problem:
\begin{quote}
The input of the problem is a 2-linear market $\calM$ that is
  strongly connected, $27$-bounded, \\ and $23$-sparse; and the output is an
  $n^{-13}$-approximate market equilibrium of $\calM$, where $n$ \\is the number of
  goods in the market.
\end{quote}
It is tedious but not very hard to show that {\sc Sparse Market} is
  a problem in PPAD\hspace{0.05cm}\footnote{
  In \cite{Geana}, the author showed how to construct a continuous map
    from any market with quasi-concave utilities such that the set of
    fixed points of the map is precisely the set of equilibria of the market.
  When the market is additively separable PLC, one can show that
    the continuous map is indeed Lipschitz continuous.
  As a result, one can reduce the problem of finding an approximate market
    equilibrium to the problem of finding an approximate fixed point in
    a Lipschitz continuous map.
  This implies a reduction from {\sc Sparse Market} to the discrete
    fixed point problem studied in \cite{HPV} (also see \cite{ChenDengTengFOCS}
    for the high-dimensional version) which is in PPAD, and thus,
    the former is also in PPAD.}.\newpage

One can in fact replace the constant
  $27$ here by any constant larger than $1$ and our main result,
  Theorem \ref{main}, below still holds.
The constant $23$, however, is related to the
  constant $10$ in Definition \ref{sparsehuhu}.

The main result of the paper is the following theorem:
\begin{theo}[Main]\label{main}
\emph{{\sc Sparse Market}} is \emph{PPAD}-complete.\vspace{0.04cm}
\end{theo}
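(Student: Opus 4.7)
The plan is to prove both directions of Theorem \ref{main}. Containment in PPAD is essentially the argument sketched in the footnote on page~7: extract a Lipschitz continuous map whose fixed points are the market equilibria (as in \cite{Geana}), discretize it, and reduce to the high-dimensional approximate fixed-point problem of \cite{HPV,ChenDengTengFOCS}. The additional restrictions defining {\sc Sparse Market} (2-linear, $27$-bounded, $23$-sparse) only shrink the input class, so the generic PPAD upper bound is preserved. The interesting direction is hardness, which I would obtain by a polynomial-time reduction from {\sc Sparse Bimatrix}: given a sparse normalized game $(\AA,\BB)$ of dimension $n$, produce a 2-linear, $27$-bounded, $23$-sparse, strongly connected market $\calM$ on $N=\Theta(n)$ goods such that every $N^{-13}$-approximate market equilibrium of $\calM$ yields an $n^{-6}$-well-supported Nash equilibrium of $(\AA,\BB)$.

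First I would construct the price-regulating market $\calM_n$ promised in the introduction: a linear (hence 2-linear) market on $n$ goods arranged cyclically, where trader $k$ holds a fixed large amount of $G_k$ and has a two-segment PLC utility over her neighbouring goods whose kink is tuned so that every normalized approximate equilibrium price $\pp$ satisfies $p_k\in[1,2]$ for all $k\in[n]$. Heuristically, if some $p_k<1$ the holder of $G_k$ is too poor to clear the demand she wants to place, while if $p_k>2$ the kinks of her neighbours choke off demand; in either case the supply-demand gap exceeds the allowed $\epsilon$-slack. Then I would prove the stability lemma advertised in the introduction: when extra traders whose total endowment of each good is at most a small fraction $\delta$ of that of the cyclic traders are added, the same argument still forces $p_k\in[1,2]$ at every approximate equilibrium, with $\delta$ absorbed into the tolerance. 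This gives us a reliable ``coordinate system'' $p_k\in[1,2]$ that survives arbitrary $\delta$-small perturbations.

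With $\calM_{2n+2}$ and the stability lemma in hand, I would use the encoding $x_k=p_k-1$ and $y_k=p_{n+k}-1$ for $k\in[n]$, and reserve $p_{2n+1},p_{2n+2}$ for ``sum traders'' that enforce $\sum_k x_k\approx 1$ and $\sum_k y_k\approx 1$ by exchanging the encoding goods against the two auxiliary ones. The heart of the reduction is a Nash gadget: for each row $i$, introduce a trader of small endowment whose two-segment PLC utilities over $G_{n+1},\dots,G_{2n}$ are scaled by the entries $A_{i1},\dots,A_{in}$, so that on the regulated region $\pp\in[1,2]^{2n}$ her optimal bundle encodes the sign and size of $\AA_i\yy^T-\max_j\AA_j\yy^T$. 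The gadget is rigged so that, whenever row $i$ is more than $\epsilon$ below optimal, the trader dumps her endowment of $G_i$ onto the market; clearing this excess forces $p_i$ down to the lower regulated boundary $1$, i.e., $x_i=0$. Symmetric column gadgets for $\BB$ enforce the $y_i=0$ implication. Because each row of $\AA$ and column of $\BB$ has at most $10$ nonzero entries, and the cyclic/sum traders are of constant sparsity, the whole construction fits inside the $23$-sparsity and $27$-boundedness budgets; strong connectivity is ensured by the cyclic backbone of $\calM_{2n+2}$.

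The main obstacle is the Nash gadget of the previous paragraph. Each PLC utility must be fixed in advance, yet it has to enforce an implication whose threshold $\AA_i\yy^T+\epsilon<\AA_j\yy^T$ depends on the unknown price vector $\pp$. The only handle available is the price-regulation property $\pp\in[1,2]^{2n}$, which fixes the domain over which the gadget must be correctly tuned; simultaneously, the endowments of all gadgets must stay $\delta$-small relative to the cyclic traders so that the stability lemma still applies. Routing the $O(N^{-13})$ market-clearing slack of the approximate equilibrium into the $n^{-6}$ gap of Definition \ref{hohoho}, without breaking 2-linearity, $27$-boundedness, or $23$-sparsity, is the delicate bookkeeping that dictates the specific constants in the statement; verifying in turn that the extracted $(\xx,\yy)$ is an $n^{-6}$-well-supported Nash equilibrium of $(\AA,\BB)$ is then a direct consequence of chasing these error bounds through the gadget analysis.
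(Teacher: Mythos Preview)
Your high-level architecture---price-regulating backbone, stability under small perturbations, the encoding $x_k=p_k-1$, $y_k=p_{n+k}-1$, and gadgets that drive $p_i$ down to $1$ when row $i$ is suboptimal---matches the paper. But the Nash gadget, where all the real work lies, has a genuine gap. You propose one trader per row $i$ whose PLC utilities over $G_{n+1},\dots,G_{2n}$ are ``scaled by $A_{i1},\dots,A_{in}$'' so that her optimal bundle encodes $\AA_i\yy^T-\max_j\AA_j\yy^T$. Two obstructions: entries of $\AA$ lie in $[-1,1]$ while PLC slopes in a $27$-bounded market must lie in $[1,27]$, so direct scaling is impossible; and more seriously, a single trader with fixed utility has no way to see $\max_j\AA_j\yy^T$---her budget and preferences depend only on $\pp$ and her own parameters, not on the other rows of $\AA$. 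The paper's resolution is to work \emph{pairwise}: for each ordered pair $(i,j)$ there is a trader $T_{(i,j,1)}$ comparing only row $i$ to row $j$. The sign issue is handled by splitting $\AA_i-\AA_j=\CC-\DD$ with $\CC,\DD\ge 0$, putting $\CC$ (plus a balancing scalar $E$) into the \emph{endowment} and $\DD$ (plus $F$) into the \emph{kink locations} of the utility. A slope hierarchy $27>9>3>1$, together with $p_k\in[1,2]$ (so any price ratio is below $3$), forces a deterministic purchase order independent of $\pp$: first the $\DD$-bundle up to its kinks, then $G_i$ up to $1/n^4$, then the remainder on $G_{2n+2}$. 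The residual budget after the first phase is exactly $p_i/n^4+(\AA_i-\AA_j)\yy'^T/n^5$, so when $(\AA_i-\AA_j)\yy'^T<-\epsilon$ the trader cannot afford her full $1/n^4$ of $G_i$, creating a net surplus that (via Lemmas~\ref{kaka1}--\ref{kaka2}) forces $p_i=1$. This budget-as-inner-product trick is the core idea and does not emerge from your per-row sketch.

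Two smaller corrections. The backbone $\calM_n$ in the paper is \emph{linear} (no kinks) and indexed by \emph{all} ordered pairs $(i,j)$, not a cycle; the all-pairs structure is what drives the price-regulation proof (when $p_1>2$, every $T_{(i,1)}$ and $T_{(1,2)}$ abandons $G_1$). And the paper does not build ``sum traders'' to enforce $\sum_k x_k\approx 1$: the auxiliary goods $G_{2n+1},G_{2n+2}$ serve instead as a balancing good and a money sink, while a separate family of tiny traders $T_i$ (endowed only with $G_{2n+1}$, wanting only $G_i$) guarantees that some $p_\ell=2$ at the best-response row $\ell$ (Property~\ref{finalprope2}); normalization of $(\xx',\yy')$ to a probability pair is performed \emph{after} extracting the approximate equilibrium, not enforced inside the market.
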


\section{A Price-Regulating Market}\label{priceregulating}

We now construct the family of price-regulating market $\{\calM_n\}$.
For each positive integer $n\geq 2$, $\calM_n$ has $n$ goods and satisfies
  the following strong price regulation property.\vspace{0.05cm}

\begin{prope}[Price Regulation]\label{simplelemma}
A price vector $\pp$ is a normalized $n^{-1}$-approximate equilibrium
  of $\calM_n$  if and only if $
1\le p_k\le 2,$ for all $k\in [n]$.\vspace{0.07cm}
\end{prope}

We start with some notation.
The goods in $\calM_n$ are  $\calG=\{G_1,...,G_n\},$ and
 the traders in $\calM_n$ are
$$
\calT=\Big\{T_\ss: \ss\in S\Big\},
\ \ \ \ \text{where\ \ \ \ $S=\Big\{\ss=(i,j): 1\le i\ne j\le n \Big\}$}.
$$
For every trader $T_\ss\in \calT$, we use $\ww_\ss\in \mathbb{R}_+^n$ to denote
  her initial endowment, $u_\ss:\mathbb{R}_+^n\rightarrow \mathbb{R}_+$
  to denote her utility function, $r_{\ss,k}(\cdot)$ to denote her
  PLC function with respect to $G_k$, and $\opt(\ss,\pp)$ to denote
  the set of bundles that maximize her utility with respect to $\pp$.\vspace{0.01cm}

Market $\calM_n$ is a linear market in which for all $\ss\in S$ and
  $k\in [n]$, $r_{\ss,k}(\cdot)$ is a ray starting at $(0,0)$.
In the construction below, we let $r_{\ss,k}(\cdot)\Leftarrow [\theta]$
  denote the action of setting $r_{\ss,k}(\cdot)$
  to be the linear function of slope\vspace{-0.2cm} $\theta\ge 0$.\newline

\noindent\textbf{Construction of $\calM_n$:}

First, we set the initial endowment vectors $\ww_\ss$:
  For every $\ss=(i,j)\in S$, we set $w_{\ss,k}= 1/n$ if $k=i$; and
  $w_{\ss,k}=0$ otherwise.

Second, we set the PLC functions $r_{\ss,k}(\cdot)$:
  For all $\ss=(i,j)\in S$ and $k\in [n]$, we set
  $r_{\ss,k}(\cdot)\Leftarrow [\theta]$ and
  $\theta=0$ if $k\ne i,j$; $\theta=1$ if $k=j$;
  and $\theta=2$ if $k=i$.

It is easy to check that $\calM_n$ constructed above is strongly connected, $2$-bounded,
  and\vspace{-0.35cm} $2$-sparse.\newline

\begin{proof}[Proof of Property \ref{simplelemma}]
The first direction is trivial. If $1\le p_k\le 2$ for all $k\in [n]$,
  then one can verify\vspace{-0.06cm} that
$$\calX=\big\{\xx_\ss=\ww_\ss:\ss\in S\big\}\vspace{-0.06cm}$$
  is a market clearing allocation that provides an optimal bundle of
  goods for each trader at price $\pp$.

The second direction is less trivial. Let $\pp$ be a normalized $(1/n)$-approximate
  market equilibrium
  of $\calM_n$, and $\calX$ be an optimal allocation that clears the market.
First, it is easy to check that $p_k$ must be positive for all $k\in [n]$ since
  otherwise, we have $x_{\ss,k}=+\infty$ for all $\ss=(i,j)$ such that $k=i$ or $j$,
  which contradicts the assumption that $\pp$ is an approximate equilibrium.\vspace{0.01cm}

Since $\pp$ is normalized, we have $p_k\ge 1$ for all $k\in [n]$.
Now assume for contradiction that Property \ref{simplelemma} is not true, then
  without loss of generality, we may assume that $p_1=\max_k p_k >2$
  and $p_2=\min_kp_k=1$.
To reach a contradiction, we focus on the amount of $G_1$ each
  trader gets in the allocation $\calX$.
First, if $1\notin \{i,j\}$ where $\ss=(i,j)$, then 
  we have $x_{\ss,1}=0$;
Second, if $i=1$ and $j=2$, then $x_{\ss,1}=0$ since
  $$
    \frac{2}{p_1}<\frac{1}{p_2}
  $$
and $T_\ss$ likes $G_2$ better than $G_1$ with respect to the price vector $\pp$;
Third, if $j=1$, then $x_{\ss,1}=0$ since
$$
\frac{1}{p_1}<\frac{2}{p_i}
$$
and $T_\ss$ likes $G_i$ better than $G_1$;
Finally, for all $\ss=(i,j)$ such that $i=1$ and $j\ne 2$,
  we have $x_{\ss,1}\le 1/n$ since the budget of $T_\ss$ is exactly $(1/n)\cdot p_1$.
As a result, we have
$$
\sum_{\ss\in S} x_{\ss,1}\le \frac{n-2}{n},\ \ \ \text{while}\ \ \
\sum_{\ss\in S} w_{\ss,1}=\frac{n-1}{n},
$$
which contradicts the assumption that $\pp$ is a $(1/n)$-approximate equilibrium since
$$
\left|\frac{n-2}{n}-\frac{n-1}{n}\right|
  > \frac{1}{n}\cdot \frac{n-1}{n}.
$$
The price-regulation property then follows.
\end{proof}

Let $x_k=p_k-1$ for $k\in [n]$, then $\calM_n$ provides us
  a way to encode $n$ free variables $x_1,...,x_n$ between $0$ and $1$.
In the next section, we will use $\calM_{2n+2}$ and the first $2n$ entries of $\pp$:
  $$x_k=p_k-1\ \ \ \ \text{and}\ \ \ \ y_k=p_{n+k}-1,\ \ \ \ \text{for $k\in [n]$,}$$
to encode a pair of distributions $(\xx,\yy)$.
Starting from an $n\times n$ sparse two-player game $(\AA,\BB)$,
  we will show how to add more traders to ``perturb'' the
  price-regulating market $\calM_{2n+2}$
  so that any approximate equilibrium $\pp$ of the new market yields
  an approximate Nash equilibrium $(\xx,\yy)$ of $(\AA,\BB)$.\vspace{0.06cm}

\section{Reduction from {\sc Sparse Bimatrix} to {\sc Sparse Market}}\label{allconstruction}

In this section, we give a polynomial-time reduction from {\sc Sparse Bimatrix} to
  {\sc Sparse Market}.
Given an $n\times n$ sparse two-player game $(\AA,\BB)$,
  where $\AA,\BB\in [-1,1]^{n\times n}$,
  we construct an additively separable PLC market $\calM$ by
  adding more traders to the price-regulating market $\calM_{2n+2}$.
There are $2n+2$ goods $\calG=\{G_1,...,G_{2n},G_{2n+1},G_{2n+2}\}$ in
  $\calM$, and the traders $\calT$ in $\calM$ are
$$
\calT= \Big\{T_\ss,T_\uu,T_\vv,T_i:\ss\in S,\uu\in U, \vv\in V,i\in [2n]\Big\},$$
where $S=\big\{\ss=(i,j):1\le i\ne j\le 2n+2\big\}$,\vspace{-0.08cm}
\begin{equation*}
U=\big\{\uu=(i,j,1):1\le i\ne j\le n\big\}\ \ \ \text{and}\ \ \
V=\big\{\vv=(i,j,2):1\le i\ne j\le n\big\}.
\end{equation*}
Note that $|\calT|=O(n^2)$.
The traders $T_\ss$, where $\ss\in S$, have almost the same initial endowments $\ww_\ss$
  and PLC functions $r_{\ss,k}(\cdot)$ as in $\calM_{2n+2}$;
  we will only slightly modify these parameters to ease the analysis in the next section.



For each agent $T\in \calT$, we will set her PLC function $r(\cdot)$
  with respect to $G_k$, $k\in [2n+2]$, to one of the following functions:
\begin{enumerate}
\item $r(\cdot)$ is the zero function: $r(x)=0$ for all $x\ge 0$ (denoted by
  $r(\cdot)\Leftarrow [0]$); or
\item $r(\cdot)$ is a ray: $r(x)=\theta\cdot x$ for all $x\ge 0$ (denoted by
  $r(\cdot)\Leftarrow [\theta]$); or
\item $r(\cdot)$ is a 2-segment PLC function with representation
  $[\theta_0,\theta_1;a_1]$ (denoted by
  $r(\cdot)\Leftarrow [\theta_0,\theta_1 ;a_1]$).

\end{enumerate}

\subsection{Setting up the Market}\label{}

For each trader $T\in \calT$, we set her initial
  endowment and PLC utility functions as following:

\subsubsection{Traders $T_\ss$, where $\ss\in S$}

For each trader $T_\ss\in \calT$, where $\ss=(i,j)\in S$,
  we set her initial endowments $\ww_s$ and her
  PLC functions $r_{\ss,k}(\cdot)$  almost the same as hers in $\calM_{2n+2}$.

The initial endowment $\ww_\ss$ is set as:
  $w_{\ss,k}=1/n$ if $k=i$; and $w_{\ss,k}=0$ otherwise, where $k\in [2n+2]$.\vspace{0.01cm}

The PLC functions $r_{\ss,k}(\cdot)$ is set as:
 $r_{\ss,k}(\cdot)\Leftarrow [\theta]$ and
  $\theta=0$ if $k\notin \{i,j\}$; $\theta=1$ if $k=j$;
  and $\theta=2$ if $k=i$, where $k\in [2n+2]$.


\subsubsection{Traders $T_{\uu}$, where $\uu\in U$}

Let $\uu=(i,j,1)$ be a triple in $U$ with $1\le i\ne j\le n$.
We use $\AA_i$ and $\AA_j$ to denote the $i$th and $j$th row
  vectors of $\AA$, respectively.
We define $\CC$ and $\DD$ to be the following $n$-dimensional vectors: For
   $k\in [n]$,
$$(C_k,D_k)=(A_{i,k}-A_{j,k},0)\ \text{if}\ A_{i,k}-A_{j,k}\ge 0;\
\text{and}\ (C_k,D_k)=(0,A_{j,k}-A_{i,k})\ \text{otherwise}.$$
By definition, we have $\CC-\DD=\AA_i-\AA_j$ while both vectors $\CC$ and $\DD$ are
  nonnegative.
Moreover, because $\AA$ is a sparse matrix, the
  number of nonzero entries in either $\CC$ or $\DD$ is at most $20$
  and every entry is between $0$ and $2$.
We also let $E$ and $F$ be the following two nonnegative numbers:\vspace{0.04cm}
$$
(E,F)=\left(\sum_{k\in [n]}D_k-\sum_{k\in [n]}C_k,0\right)\ \text{if}\
  \sum_{k\in [n]}D_k\ge \sum_{k\in [n]} C_k;
\ (E,F)=\left(0,\sum_{k\in [n]}C_k-\sum_{k\in [n]}D_k\right)\ \text{otherwise}.\vspace{0.06cm}
$$
Accordingly, we have $E,F\ge 0$ and
$$
E+\sum_{k\in [n]}C_k=F+\sum_{k\in [n]}D_k.
$$
Moreover, since $\CC$ and $\DD$ are sparse, we also have $$0\le E,F\le
  \max\left(\sum_{k\in [n]} C_k,\sum_{k\in [n]} D_k\right)\le 40.$$

We set the initial endowment vector $\ww_\uu=(w_{\uu,1},...,w_{\uu,2n+1},w_{\uu,2n+2})$
  of $T_\uu$ as follows:
\begin{enumerate}
\item $w_{\uu,i}= 1/n^4$;
  $w_{\uu,k}=w_{\uu,2n+2}=0$ for all other $k\in [n]$;
\item $w_{\uu,n+k}=C_k/n^5$ for all $k\in [n]$; and
\item $w_{\uu,2n+1}=E/n^5$.
\end{enumerate}
It is easy to verify that the number of
  nonzero entries in $\ww_\uu$ is at most $22$.\vspace{0.01cm}

We set the PLC utility functions $r_{\uu,k}(\cdot)$, where
  $k\in [2n+2]$, of $T_\uu$ as follows:
\begin{enumerate}
\item $r_{\uu,i}(\cdot)\Leftarrow [9,1;1/n^4]$; and
  $r_{\uu,k}(\cdot)\Leftarrow [0]$ for all other $k\in [n]$;
\item $r_{\uu,2n+2}(\cdot)\Leftarrow [3]$;
\item $r_{\uu,n+k}(\cdot)\Leftarrow [0]$ for all $k\in [n]$
  such that $D_{k}=0$;
\item $r_{\uu,n+k}(\cdot)\Leftarrow [27,1;D_k/n^5]$ for all $k\in [n]$
  such that $D_k>0$; and
\item $r_{\uu,2n+1}(\cdot)\Leftarrow [0]$ if $F=0$; and
  $r_{\uu,2n+1}(\cdot)\Leftarrow [27,1;F/n^5]$ if $F>0$.
\end{enumerate}
Note that the number of $k\in [2n+2]$
  such that $r_{\uu,k}(\cdot)$ is not the zero function is at most $23$.\vspace{0.015cm}

The constants $1$, $3$, $9$ and $27$ in the construction may look strange at first sight.
The motivation is that, if the price-regulation property still
  holds for the new market $\calM$ (which turns out to be true),
  then we know exactly the preference of $T_\uu$ over the goods since $3>2$.
See the proof of Lemma \ref{hahalemma1} for more details.

\subsubsection{Traders $T_\vv$, where $\vv\in V$}

The behavior of $T_\vv$, $\vv\in V$, is very similar to that of $T_\uu$
  except that it works on the second matrix $\BB$.\vspace{0.01cm}

Let $\vv=(i,j,2)$ be a triple in $V$ with $1\le i\ne j\le n$.
We use $\BB_i$ and $\BB_j$ to denote the $i$th and $j$th column
  vectors of $\BB$, respectively.
Similarly, we define the following $n$-dimensional vectors $\CC$ and $\DD$:
$$
(C_k,D_k)=(B_{k,i}-B_{k,j},0)\ \text{if}\ B_{k,i}-B_{k,j}\ge 0;\
\text{and}\ (C_k,D_k)=(0,B_{k,j}-B_{k,i})\ \text{otherwise}.
$$
As a result, we have $\CC-\DD=\BB_i-\BB_j$ while both $\CC$ and
  $\DD$ are nonnegative.
We also define $E,F\ge 0$ in a similar way so that\vspace{0.04cm}
$$
E+\sum_{k\in [n]} C_k=F+\sum_{k\in [n]} D_k\ \ \ \ \text{and}\ \ \ \ 0\le E,F\le 40.
$$

We set the initial endowment vector $\ww_\vv
  =(w_{\vv,1},...,w_{\vv,2n+1},w_{\vv,2n+2})$ of $T_\vv$ to be
\begin{enumerate}
\item $w_{\vv,n+i}=1/n^4$;
  $w_{\vv,n+k}=w_{\vv,2n+2}=0$ for all other $k\in [n]$;
\item $w_{\vv,k}=C_k/n^5$ for all $k\in [n]$; and
\item $w_{\vv,2n+1}=E/n^5$.
\end{enumerate}
We set the PLC utility functions $r_{\vv,k}(\cdot)$, where $k\in [2n+2]$,
  of $T_\vv$ as follows:
\begin{enumerate}
\item $r_{\vv,n+i}(\cdot)\Leftarrow [9,1;1/n^4]$; and
  $r_{\vv,n+k}(\cdot)\Leftarrow [0]$ for all other $k\in [n]$;
\item $r_{\vv,2n+2}(\cdot)\Leftarrow [3]$;
\item $r_{\vv,k}(\cdot)\Leftarrow [0]$ for all $k\in [n]$
  such that $D_{k}=0$;
\item $r_{\vv,k}(\cdot)\Leftarrow [27,1;D_k/n^5]$ for all $k\in [n]$
  such that $D_{k}>0$; and
\item $r_{\vv,2n+1}(\cdot)\Leftarrow [0]$ if $F=0$; and
  $r_{\vv,2n+1}(\cdot)\Leftarrow [27,1;F/n^5]$ if $F>0$.
\end{enumerate}
Again, the number of nonzero entries in $\ww_\vv$
  is at most $22$, and the number of indices $k$ such that
  $r_{\vv,k}(\cdot)$ is not the zero function is at most $23$.

\subsubsection{Traders $T_i$, where $i\in [2n]$}

Finally, for each $i\in [2n]$, we set the initial endowment vector
  $\ww_i=(w_{i,1},...,w_{i,2n+2})$ of $T_i$ as follows:
$$
w_{i,2n+1}=1/n^{12}\ \ \ \ \text{and}\ \ \ \
w_{i,k}=0,\ \ \ \ \text{for all other $k\in [2n+2]$.}
$$
We set the PLC utility functions $r_{i,k}(\cdot)$, where $k\in [2n+2]$, of $T_i$ as follows:
$$r_{i,i}(\cdot)\Leftarrow [1]\ \ \ \ \text{and}\ \ \ \
  r_{i,k}(\cdot)\Leftarrow [0],\ \ \ \ \text{for all other $k\in [2n+2]$.}\vspace{0.15cm}$$



\subsection{From Approximate Market Equilibria to Approximate Nash Equilibria}

By definition, $\calM$ is a $2$-linear additively separable
  PLC market which is strongly connected, $27$-bounded and $23$-sparse.
Let $N=2n+2$, the number of goods in $\calM$.
To prove Theorem \ref{main}, we only need to show that from any
  $N^{-13}$-approximate market equilibrium $\pp$~of $\calM$, one can construct
  an $ n^{-6}$-well-supported Nash equilibrium $(\xx,\yy)$
  of $(\AA,\BB)$ in polynomial time.
To this end, let $(\xx',\yy')$ denote the following
  two $n$-dimensional vectors:\vspace{-0.08cm}
\begin{equation}\label{usefuleq1}
x'_k=p_k-1\ \ \ \ \text{and}\ \ \ \ y'_k=p_{n+k}-1,\ \ \ \ \text{for all $k\in [n]$.}
\end{equation}
Then, we normalize $(\xx',\yy')$ to get a pair of distributions
  $(\xx,\yy)$ (we will show later that $\xx',\yy'\ne \00$):\vspace{0.06cm}
\begin{equation}\label{usefuleq2}
x_k=\frac{x'_k}{\sum_{i\in [n]} x'_i}\ \ \ \ \text{and}\ \ \ \
y_k=\frac{y'_k}{\sum_{i\in [n]} y'_i},\ \ \ \ \ \text{for all $k\in [n]$.}\vspace{0.06cm}
\end{equation}
Theorem \ref{main} follows directly from Theorem \ref{mainhaha}
  which we will prove in the next section.
Note that if $\pp$
  is a $N^{-13}$-approximate equilibrium, then it is
  also an $n^{-13}$-approximate equilibrium by definition.

\begin{theo}\label{mainhaha}
If $\pp$ is an $n^{-13}$-approximate market equilibrium of $\calM$,
  then $(\xx,\yy)$ constructed above is an $n^{-6}$-well-supported
  Nash equilibrium of $(\AA,\BB)$.\vspace{0.05cm}
\end{theo}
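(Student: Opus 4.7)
The plan is to convert the $n^{-13}$-approximate market equilibrium condition on $\pp$ into the $n^{-6}$-well-supported Nash condition on $(\xx,\yy)$, in three stages followed by a contradiction argument. The first stage is to verify that Property~\ref{simplelemma} survives the addition of $T_\uu,T_\vv,T_i$: any $n^{-13}$-approximate equilibrium $\pp$ of $\calM$ still satisfies $1\le p_k\le 2$ for every $k\in[2n+2]$. The counting argument of Property~\ref{simplelemma} goes through because the assumption $\max_k p_k>2$ produces a $\Theta(1/n)$ supply-vs.-demand gap on one good, while the added traders only perturb each side by $O(1/n^3)$ and the approximation slack is only $n^{-13}\cdot O(1)$, both dwarfed by $1/n$.

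The second stage leverages these price bounds: under $1\le p_k\le 2$, the bang-per-buck values on the new traders' PLC segments fall into strictly separated tiers — slope-$27$ gives at least $13.5$, slope-$9$ gives at least $4.5$, slope-$3$ gives at least $1.5$, and any slope-$1$ segment gives at most $1$. Consequently every $T_\uu$ must, in order, saturate its slope-$27$ first segments (on $G_{n+k}$ for $D_k>0$ and on $G_{2n+1}$ if $F>0$), then the slope-$9$ first segment of $G_i$ (exactly $1/n^4$) if affordable, then $G_{2n+2}$ at slope $3$, before ever touching slope-$1$ tails; $T_\vv$ behaves symmetrically, each $T_\ss$ behaves exactly as in $\calM_{2n+2}$, and each $T_i$ spends its entire budget on $G_i$.

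The third stage is residual accounting. Using $\CC-\DD=\AA_i-\AA_j$, $E+\sum C_k=F+\sum D_k$, and the encoding $p_{n+k}=1+y'_k$, the budget of $T_\uu=(i,j,1)$ remaining after saturating its slope-$27$ layer simplifies to
\[
R(\uu)=\frac{p_i}{n^4}+\frac{1}{n^5}\Bigl[(\AA_i-\AA_j)\,\yy'-(E-F)(1-p_{2n+1})\Bigr].
\]
The identity $\sum_{i\ne j}(\AA_i-\AA_j)=\00$ forces $\sum_\uu E^\uu=\sum_\uu F^\uu$, and market clearing for $G_{2n+1}$ (whose supply is dominated by the $T_\uu,T_\vv$ contributions) then pins $|p_{2n+1}-1|$ at a scale making the correction term in $R(\uu)$ negligible. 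Each $T_\uu$ therefore either buys the full $1/n^4$ of $G_i$ and spends $\Theta((\AA_i-\AA_j)\yy'/n^5)$ on $G_{2n+2}$, or fails to do so and leaves net excess supply of $G_i$ equal to $(\AA_j-\AA_i)\yy'/(n^5 p_i)$, up to lower-order terms.

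Finally, I would close with a contradiction: suppose \eqref{nasheq1} fails, i.e., some $i,j\in[n]$ satisfy $x_i>0$ (equivalently $p_i>1$) yet $(\AA_j-\AA_i)\yy^T>n^{-6}$. Un-normalizing gives $(\AA_j-\AA_i)\yy'>n^{-6}\|\yy'\|_1$, and summing the Stage-3 imbalances across the traders $T_\uu=(i,\cdot,1)$ in the market-clearing identity for $G_i$ yields an excess supply exceeding the $n^{-13}$-slack — the desired contradiction. A mirror-image argument using the $T_\vv$ traders and the goods $G_{n+i}$ handles \eqref{nasheq2}. Non-triviality $\xx',\yy'\neq\00$ is established by a related bookkeeping check: $\xx'=\00$ would leave each $G_i$ market carrying $\Theta(1/n^{12})$ of net demand from the $T_i$ traders that cannot be matched by a compensating $T_\uu$-side excess supply within the $n^{-13}$-slack. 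The chief obstacle is the third stage, where the corrections involving $p_{2n+1}-1$, the slope-$3$ absorption by $G_{2n+2}$, and the $T_\ss$-trader contributions to the $G_i$ market clearing must all be shown to contribute only $o(n^{-13})$ at the relevant scale — exactly what the architecture of $\calM$, with its sink good $G_{2n+1}$, overflow good $G_{2n+2}$, and the $1/n^4$-vs.-$1/n^5$ scale separation between trader classes, is designed to enable.
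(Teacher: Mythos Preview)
Your three-stage plan tracks the paper closely through the price-regulation lemma (Lemma~\ref{bounded}) and the bang-per-buck tier analysis, and your residual-budget formula $R(\uu)$ is essentially equation~(\ref{rrr}). Two points, however, do not go through as written.

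First, the stated treatment of the $T_\ss$ traders is wrong: their contribution to the $G_i$ clearing balance is \emph{not} $o(n^{-13})$ --- $\calT_1$ carries $\Theta(1)$ mass per good and its net supply of $G_k$ can be $\Theta(1/n)$ whenever $p_k\in\{1,2\}$. What the paper isolates (Lemmas~\ref{kaka1} and~\ref{kaka2}) is a \emph{sign} statement, not a magnitude bound: if $\calT_2$ has net excess supply of $G_k$ beyond $3/n^{13}$, approximate clearing forces $\calT_1$ into net excess demand, meaning some $T_{(j,k)}$ bought $G_k$, hence $1/p_k\ge 2/p_j$ and so $p_k=1$; symmetrically, $\calT_2$ excess demand forces $p_k=2$. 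Your contradiction (``$p_i>1$ yet $\calT_2$-excess-supply of $G_i$'') closes via this mechanism, not via any smallness of $\calT_1$. The same lemma also gives $p_{2n+1}=1$ \emph{exactly} (Corollary~\ref{coro1}), since the $T_i$'s endow $2/n^{11}$ of $G_{2n+1}$ and demand none while $(\uu,\uu')$ and $(\vv,\vv')$ pairs balance; this is both simpler and tighter than the approximate pinning you sketch.

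Second, and more seriously, your non-triviality step is too weak for the rest of the argument. You argue only $\xx',\yy'\ne\00$, but after un-normalizing you have merely $(\AA_j-\AA_i)\yy'^T>n^{-6}\|\yy'\|_1$, and for the resulting imbalance $\|\yy'\|_1/(2n^{11})$ to beat the $\Theta(n^{-13})$ slack you need $\|\yy'\|_1$ bounded below by a constant. The paper's Property~\ref{finalprope2} supplies exactly this: for the row $\ell$ maximizing $\AA_\ell\yy'^T$, every $T_{(\ell,j,1)}$ has residual budget at least $p_\ell/n^4$ and therefore buys its full $1/n^4$ of $G_\ell$, so the only $\calT_2$-imbalance on $G_\ell$ is the $T_\ell$-demand of order $1/n^{12}$; Lemma~\ref{kaka2} then forces $p_\ell=2$, i.e.\ $x'_\ell=1$, hence $\|\xx'\|_1\ge 1$ (and symmetrically for $\yy'$). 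Your version (``$\xx'=\00$ would leave each $G_i$ with $\Theta(1/n^{12})$ net demand that cannot be matched'') fails precisely because for non-maximizing rows $i$ the $T_\uu$-side excess supply can be $\Theta(1/n^5)$ and easily swallows the $T_i$ demand; only the best row is protected.
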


\section{Correctness of the Reduction}

In this section, we prove Theorem \ref{mainhaha}.
Let $\pp=(p_1,...,p_{2n+2})$ be an normalized $n^{-13}$-approximate market equilibrium
  of $\calM$.
By the same argument used earlier, we can prove that $p_k>0$ for
  all $k\in [2n+2]$.
Therefore, we have $p_k\ge 1$ for all $k$ and $\min_k p_k=1$.
Let $\calX$ be an optimal allocation with respect to $\pp$
  that clears the market approximately:
$$
\calX=\Big\{\aa_\ss,\aa_\uu,\aa_\vv,\aa_i\in \mathbb{R}_+^{2n+2}:
  \ss\in S,\uu\in U,\vv\in V,i\in [2n]\Big\}.
$$

We start with the following notation.
Let $\calT'\subseteq \calT$ be a subset of traders, and $k\in [2n+2]$. We use
  $w_k[\calT']$ to denote the amount of good $G_k$ that traders in $\calT'$
  possess at the beginning and $a_k[\calT']$
  to denote the amount of good $G_k$ that $\calT'$ receives in the
  final allocation $\calX$.\vspace{0.01cm}

According to our construction, $w_k[\calT]\in [2,3]$ for every $k\in [2n+2]$.
Because $\calX$ clears the market approximately, we have
\begin{equation}\label{usefuleq3}
\big|w_k[\calT]-a_k[\calT]\big|\le w_k[\calT]/n^{13}\le 3/n^{13},
\ \ \ \ \ \text{for all $k\in [2n+2]$.}
\end{equation}
We further divide the traders into two groups:
$
\calT_1=\{T_\ss:\ss\in S\}$  and $\calT_2=\calT-\calT_1.
$
Then (\ref{usefuleq3}) implies
\begin{equation}\label{usefuleq4}
\big| w_k[\calT_1]-a_k[\calT_1]
  + w_k[\calT_2]-a_k[\calT_2] \big| \le 3/n^{13},
\ \ \ \ \ \text{for all $k\in [2n+2]$.}\vspace{0.06cm}
\end{equation}

\subsection{The Price-Regulation Property}

First, we show that, the price vector $\pp$ must still
  satisfy the price-regulation property as in the
  price-regulating market $\calM_{2n+2}$.
We will use the fact that traders in $\calT_1$ possess almost
  all the goods in $\calM$.\vspace{0.06cm}

\begin{lemm}[Price Regulation]\label{bounded}
For all $k\in [2n+2]$, $1\le p_k\le 2$.\vspace{0.06cm}
\end{lemm}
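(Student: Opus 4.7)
The lower bound $p_k\ge 1$ is already in hand: the paragraph preceding the lemma establishes $p_k>0$ for every $k$, and then $p_k\ge 1$ follows from normalization. So the task is $p_k\le 2$. My plan is to lift the argument of Property~\ref{simplelemma} from $\calM_{2n+2}$ to $\calM$; the guiding intuition is that the ``perturbation'' traders in $\calT_2=\calT-\calT_1$ hold, and can afford, only $O(1/n^2)$ of any single good, which is negligible compared to the $\Theta(1/n)$ supply-demand gap that the traders in $\calT_1$ force on any good whose price exceeds $2$. Suppose therefore, for contradiction, that $p_{k_1}:=\max_k p_k>2$, and fix an index $k_0$ with $p_{k_0}=1$; I aim for a contradiction with the approximate market-clearing condition at good $G_{k_1}$.

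I would first bound the $\calT_1$-demand $a_{k_1}[\calT_1]$ by repeating, essentially verbatim, the case analysis in Property~\ref{simplelemma}. For $\ss=(i,j)\in S$: (i) if $k_1\notin\{i,j\}$ then $x_{\ss,k_1}=0$ because $r_{\ss,k_1}\equiv 0$; (ii) if $j=k_1$ and $i\ne k_1$ then $T_\ss$ strictly prefers $G_i$ to $G_{k_1}$ since $2/p_i\ge 2/p_{k_1}>1/p_{k_1}$; (iii) the pair $\ss=(k_1,k_0)$ strictly prefers $G_{k_0}$ since $2/p_{k_1}<1/p_{k_0}$. Only the $2n$ pairs $\ss=(k_1,j)$ with $j\notin\{k_1,k_0\}$ can then purchase $G_{k_1}$, and each has budget exactly $p_{k_1}/n$, so contributes at most $1/n$. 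Hence $a_{k_1}[\calT_1]\le 2$, while $w_{k_1}[\calT_1]=(2n+1)/n$.

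Next I would bound $a_{k_1}[\calT_2]$ by the key \emph{prices-independent} estimate
\[
a_{k_1}[\calT_2]\le \sum_{T\in\calT_2}\frac{\ww_T\cdot\pp}{p_{k_1}}\le \sum_{T\in\calT_2}\pnorm{1}{\ww_T},
\]
which uses only that $p_{k_1}=\max_k p_k$. Plugging in the explicit endowments gives $\pnorm{1}{\ww_\uu},\pnorm{1}{\ww_\vv}\le 1/n^4+80/n^5$ for $\uu\in U,\vv\in V$, and $\pnorm{1}{\ww_i}=1/n^{12}$ for $i\in[2n]$. Since $|U|,|V|=O(n^2)$, the sum is $O(1/n^2)$; the same arithmetic gives $w_{k_1}[\calT_2]=O(1/n^2)$ as well. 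Combining, total supply $w_{k_1}[\calT]\ge (2n+1)/n=2+1/n$ and total demand $a_{k_1}[\calT]\le 2+O(1/n^2)$, so for all large enough $n$,
\[
w_{k_1}[\calT]-a_{k_1}[\calT]\ge 1/n-O(1/n^2)>3/n^{13}\ge n^{-13}\,w_{k_1}[\calT],
\]
contradicting the approximate market-clearing condition.

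The only nontrivial step is the prices-independent bound on $a_{k_1}[\calT_2]$ in the third paragraph; this is precisely the formalisation of the ``stability of the price-regulation property under small perturbations'' promised in the introduction, and its proof is a one-line consequence of $p_{k_1}$ being the maximum. Once it is in place, everything else is a direct copy of the $\calM_n$ argument together with straightforward arithmetic on the tiny $\calT_2$-endowments.
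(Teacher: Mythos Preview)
Your proposal is correct and follows essentially the same approach as the paper: both arguments carry the case analysis of Property~\ref{simplelemma} over to $\calT_1$ to obtain $w_{k_1}[\calT_1]-a_{k_1}[\calT_1]\ge 1/n$, and both use the one-line budget bound $a_{k_1}[\calT_2]\le \sum_{k}w_k[\calT_2]=O(1/n^2)$ (valid because $p_{k_1}=\max_k p_k$) to conclude that the tiny $\calT_2$-endowments cannot close the gap. The only cosmetic difference is bookkeeping: the paper passes through (\ref{usefuleq4}) to isolate $a_{k_1}[\calT_2]$ and derive a direct contradiction with the $O(1/n^2)$ bound, whereas you add both groups and compare to the $n^{-13}$ clearing tolerance directly.
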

\begin{proof}
Assume for contradiction that $\pp$ does not satisfies the
  price-regulation property.
Then without loss of generality, we assume that $p_1=\max_k p_k>2$ and $p_2=1$.

By the same argument used in the proof of Property \ref{simplelemma},
  we have
$$
w_1[\calT_1]=(2n+1)\cdot \frac{1}{n},\ \ \ a_1[\calT_1]\le
  2n\cdot \frac{1}{n},\ \ \ \text{and thus,\ \ \ }
  w_1[\calT_1]-a_1[\calT_1]\ge \frac{1}{n}.
$$
By (\ref{usefuleq4}), we have
\begin{equation}\label{eq5}
w_1[\calT_2]-a_1[\calT_2]\le -\frac{1}{n}+\frac{3}{n^{13}}\ \ \Longrightarrow\ \
a_1[\calT_2]\ge w_1[\calT_2]+\frac{1}{n}-\frac{3}{n^{13}}\ge
\frac{1}{n}-\frac{3}{n^{13}}
\end{equation}
because $w_1[\calT_2]\ge 0$.
However, this cannot be true since the amount of goods the
  traders in $\calT_2$ possess at the beginning is much smaller compared to $1/n$.
Even if they spend all the money on $G_1$, we still have
$$
a_1[\calT_2]\le \frac{\sum_{k\in [2n+2]}p_k\cdot w_k[\calT_2]}{p_1}
  \le \sum_{k\in [2n+2]} w_k[\calT_2]=O(n^{-2})\ll \frac{1}{n},
$$
since we assumed that $p_1=\max_k p_k$. This contradicts with (\ref{eq5}).\vspace{0.05cm}
\end{proof}




\subsection{Relations between $p_k$ and $w_k[\calT_2]-a_k[\calT_2]$}

Next, we prove two very useful relations between $p_k$ and
  $w_k[\calT_2]-a_k[\calT_2]$.\vspace{0.05cm}

\begin{lemm}\label{kaka1}
Let $\pp$ be a normalized $n^{-13}$-approximate market equilibrium
  and $\calX$ be an optimal allocation that clears the market approximately.
If $w_k[\calT_2]-a_k[\calT_2]>3/n^{13}$ for some $k\in [2n+2]$, then $p_k=1$.\vspace{0.05cm}
\end{lemm}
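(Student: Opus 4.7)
The plan is to argue by contradiction: assume $p_k > 1$ (so, by Lemma \ref{bounded}, $p_k \in (1,2]$) while simultaneously $w_k[\calT_2] - a_k[\calT_2] > 3/n^{13}$, and derive a contradiction. The strategy mirrors the combinatorial demand-counting argument in the proof of Property \ref{simplelemma} and Lemma \ref{bounded}, but now applied to the good $G_k$ rather than to the price-maximizing good, and it exploits the fact that the traders in $\calT_2$ carry only an $O(1/n^4)$ fraction of the goods so that the flow balance on $G_k$ within $\calT_1$ is essentially tight.

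First I would invoke the approximate market-clearing inequality (\ref{usefuleq4}) together with the hypothesis to deduce $a_k[\calT_1] - w_k[\calT_1] > 0$: since the two quantities $w_k[\calT_1] - a_k[\calT_1]$ and $w_k[\calT_2] - a_k[\calT_2]$ sum to something of magnitude at most $3/n^{13}$, and the latter strictly exceeds $3/n^{13}$, the former is strictly negative, i.e., $a_k[\calT_1] > w_k[\calT_1] = (2n+1)/n$. Next I would upper bound $a_k[\calT_1]$ under the assumption $p_k > 1$. For a trader $T_\ss$ with $\ss = (i,j) \in S$, the bang-per-buck ratios are $2/p_i$ on $G_i$, $1/p_j$ on $G_j$, and zero on every other good, so traders with $k \notin \{i,j\}$ contribute nothing to $a_k[\calT_1]$. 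A trader $T_{(i,k)}$ would purchase $G_k$ only if $1/p_k \ge 2/p_i$, i.e., $p_i \ge 2p_k > 2$, which violates Lemma \ref{bounded}; so these traders also contribute zero to $a_k[\calT_1]$. Hence only the $2n+1$ traders $T_{(k,j)}$, $j \ne k$, may demand $G_k$, and each, having budget $p_k/n$, can acquire at most $1/n$ units of $G_k$. Summing yields $a_k[\calT_1] \le (2n+1)/n = w_k[\calT_1]$, which contradicts the inequality derived in the first step.

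The main technical point is the case analysis in the second step: one must exploit the strict slope gap between $2$ and $1$ in the utilities of $\calT_1$ to rule out any demand for $G_k$ from the traders $T_{(i,k)}$ whenever $p_k > 1$. Because $p_i \le 2 < 2p_k$ holds \emph{strictly} under the assumption $p_k > 1$, the preference of $T_{(i,k)}$ for $G_i$ over $G_k$ is strict, so $a_{(i,k),k} = 0$; this is exactly what keeps the total demand for $G_k$ within $\calT_1$ capped at its initial supply of $(2n+1)/n$, and it is this cap that breaks flow balance the moment the perturbing traders $\calT_2$ try to offload more than $3/n^{13}$ of $G_k$.
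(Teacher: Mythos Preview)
Your argument is correct and is essentially the contrapositive of the paper's proof: the paper observes directly that $a_k[\calT_1]>w_k[\calT_1]$ together with the budget cap on the $T_{(k,j)}$'s forces some $T_{(i,k)}$ to be buying $G_k$, whence $1/p_k\ge 2/p_i$ and $p_k\le p_i/2\le 1$, while you assume $p_k>1$, rule out any purchase of $G_k$ by $T_{(i,k)}$, and reach the same budget contradiction. The ingredients and the key step are identical.
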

\begin{proof}
Without loss of generality, we prove the lemma for the case when $k=1$.
By (\ref{usefuleq4}), we have \vspace{-0.05cm}
\begin{equation*}
w_1[\calT_1]-a_1[\calT_1]<0.\vspace{-0.05cm}
\end{equation*}
This means that, in the market participated
  by traders $T_\ss$,
  the amount of $G_1$ which they would like to buy is strictly more than
  the amount of $G_1$ they possess at the beginning.
Intuitively this implies that the price $p_1$ of $G_1$ is lower than what it should be,
  and indeed we show below that $p_1=\min_k p_k=1$.

On one hand, by the construction, only the following traders $T_\ss$
  are interested in $G_1$:
$$
S_1=\{\ss=(1,j):j\ne 1\}\ \ \ \ \text{and}\ \ \ \ S_2=\{\ss=(i,1):i\ne 1\}.
$$
On the other hand, we have $$a_1[T_\ss,\ss\in S_1]\le w_1[T_\ss,\ss\in S_1]=w_1[\calT_1]$$
  due to the budget limitation.
As a result, there must exist an $\ss=(i,1)\in S_2$ such that
$
a_{\ss,1}>0.
$
Since $\aa_\ss$ is an optimal bundle for $T_\ss$ with respect to $\pp$, we have
$$
\frac{1}{p_1}\ge \frac{2}{p_i}\ \ \Longrightarrow\ \ p_1\le \frac{p_i}{2}.
$$
By Lemma \ref{bounded}, the price-regulation property, we
  conclude that $p_1=1$ and the lemma is proved.\vspace{0.05cm}
\end{proof}

\begin{lemm}\label{kaka2}
Let $\pp$ be a normalized $n^{-13}$-approximate market equilibrium
  and $\calX$ be an optimal allocation that clears the market approximately.
If $w_k[\calT_2]-a_k[\calT_2]<-3/n^{13}$ for some $k\in [2n+2]$, then $p_k=2$.\vspace{0.05cm}
\end{lemm}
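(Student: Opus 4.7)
The plan is to argue by contradiction in direct analogy with Lemma~\ref{kaka1}: assume for contradiction that $p_k<2$. By Lemma~\ref{bounded} we then have $1\le p_k<2$. The hypothesis $w_k[\calT_2]-a_k[\calT_2]<-3/n^{13}$ combined with (\ref{usefuleq4}) yields $w_k[\calT_1]-a_k[\calT_1]>0$; in words, the ``old'' traders $T_\ss$ (those inherited from $\calM_{2n+2}$) collectively fail to buy back all the $G_k$ that they initially possess.

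I would then show that this conclusion is incompatible with the preferences of those old traders whenever $p_k<2$. Only the $2n+1$ traders $T_\ss$ with $\ss=(k,j)$, $j\ne k$, own any $G_k$ in $\calT_1$, each holding $1/n$ of it, so $w_k[\calT_1]=(2n+1)/n$. Each such trader has budget $p_k/n$ and linear utility with slope $2$ on $G_k$ and slope $1$ on $G_j$, giving bang-per-buck ratios $2/p_k$ and $1/p_j$. Under the assumption $p_k<2$, we have $p_k<2\le 2p_j$ (using $p_j\ge 1$ from Lemma~\ref{bounded}), so $2/p_k>1/p_j$, and therefore her unique optimal bundle spends the entire budget on $G_k$, yielding exactly $1/n$ units of $G_k$.

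Summing over the $2n+1$ such traders, we obtain $a_k[\calT_1]\ge (2n+1)/n = w_k[\calT_1]$, which contradicts $w_k[\calT_1]-a_k[\calT_1]>0$. Hence $p_k=2$, completing the proof. There is no real obstacle here beyond correctly identifying which old traders' optimal bundles are forced by the assumption $p_k<2$; the whole argument is the precise mirror image of Lemma~\ref{kaka1}, where ``excess supply'' by $\calT_2$ on good $G_k$ pushes $p_k$ down to its minimum value $1$, while here ``excess demand'' by $\calT_2$ on $G_k$ pushes $p_k$ up to its maximum value $2$.
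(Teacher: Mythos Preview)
Your proof is correct and essentially the same as the paper's. The paper argues directly rather than by contradiction---from $a_k[\calT_1]<w_k[\calT_1]$ it picks some $\ss=(k,j)$ with $a_{\ss,k}<w_{\ss,k}$, notes that $T_\ss$ must then be spending on $G_j$, and deduces $1/p_j\ge 2/p_k$, hence $p_k\ge 2p_j\ge 2$---but this is just the contrapositive of your bang-per-buck argument.
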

\begin{proof}
Without loss of generality, we prove the lemma for the case when $k=1$.
By (\ref{usefuleq4}), we have\vspace{-0.05cm}
\begin{equation*}
w_1[\calT_1]-a_1[\calT_1]>0.\vspace{-0.05cm}
\end{equation*}
This means that, in the market participated
  by traders $T_\ss$,
  the amount of $G_1$ which they would like to buy is strictly less than
  the amount of $G_1$ they possess at the beginning.
Intuitively, this implies that the~price $p_1$ of $G_1$ is higher than what it should be,
  and indeed we show below that $p_1=2=\max_k p_k$.

Since $a_1[\calT_1]<w_1[\calT_1]$, there must exist a $j\in [2n+2]$
  with $j\ne 1$ such that $\ss=(1,j)$ and\vspace{-0.05cm}
$$
a_{\ss,1}<w_{\ss,1}.\vspace{-0.05cm}
$$
(Otherwise $a_1[\calT_1]\ge w_1[\calT_1]$).
This means that $T_{\ss}$ spends some of its money to buy $G_j$ and thus,
$$
\frac{1}{p_j}\ge \frac{2}{p_1}\ \ \Longrightarrow\ \ p_1\ge 2p_j.
$$
By Lemma \ref{bounded}, the price-regulation property,
  we conclude that $p_1=2$ and the lemma is proved.
\end{proof}

We also need the following two lemmas. We only prove the first one.
The second one can be proved symmetrically.

\begin{lemm}\label{hahalemma1}
Let $\uu=(i,j,1)$ be a triple in $U$ and
  $\uu'=(j,i,1)\in U$.
Then for any $k\in [2n+1]$, we have\vspace{-0.08cm}
\begin{equation}\label{usefuleq6}
w_{\uu,k}+w_{\uu',k}\ge a_{\uu,k}+a_{\uu',k}.\vspace{0.08cm}
\end{equation}
\end{lemm}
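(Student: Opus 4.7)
The plan is to show that, under price regulation, both $T_\uu$ and $T_{\uu'}$ consume each good $G_k$ with $k\in[2n+1]$ only on the high-bang-per-buck initial segment of their utility, and that the combined below-threshold demand for each such good exactly matches the combined initial endowment. The workhorse will be Lemma~\ref{bounded}, which pins every $p_k$ into $[1,2]$ and hence allows a total ordering of the utility segments by bang-per-buck.

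First I would rank the segments available to $T_\uu$. Using $p_k\in[1,2]$, the slope-$27$ initial segments of $G_{n+k}$ (for $k$ with $D_k>0$) and of $G_{2n+1}$ (if $F>0$) give bang-per-buck at least $27/2=13.5$; the slope-$9$ initial segment of $G_i$ gives bang-per-buck in $[4.5,9]$; the unbounded slope-$3$ function on $G_{2n+2}$ gives bang-per-buck in $[1.5,3]$; and every slope-$1$ tail gives bang-per-buck at most $1$. Since $13.5>9$, $4.5>3$, and $1.5>1$, this ordering is strict regardless of the actual prices. A standard exchange argument then forces any optimal bundle $\aa_\uu\in\opt(\uu,\pp)$ to fill the segments greedily from the top; because $G_{2n+2}$ is a single slope-$3$ ray that strictly dominates every slope-$1$ tail, no slope-$1$ tail is ever consumed. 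This yields the ceilings $a_{\uu,n+k}\le D_k/n^5$ for every $k\in[n]$, $a_{\uu,2n+1}\le F/n^5$, $a_{\uu,i}\le 1/n^4$, and $a_{\uu,m}=0$ for $m\in[n]\setminus\{i\}$ (since $r_{\uu,m}\equiv 0$ there).

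Next I would repeat the argument verbatim for $T_{\uu'}$. Going from $\uu=(i,j,1)$ to $\uu'=(j,i,1)$ flips the sign of $\AA_i-\AA_j$, so the positive/negative-part vectors $\CC,\DD$ and the scalars $E,F$ swap roles. The symmetric ceilings follow: $a_{\uu',n+k}\le C_k/n^5$, $a_{\uu',2n+1}\le E/n^5$, $a_{\uu',j}\le 1/n^4$, and $a_{\uu',m}=0$ for $m\in[n]\setminus\{j\}$.

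Summing the two ceilings term-by-term gives, for each $k\in[2n+1]$, an upper bound on $a_{\uu,k}+a_{\uu',k}$ that matches the combined initial endowment $w_{\uu,k}+w_{\uu',k}$ exactly: $(C_k+D_k)/n^5$ for $G_{n+k}$, $(E+F)/n^5$ for $G_{2n+1}$, $1/n^4$ for each of $G_i$ and $G_j$, and $0$ for every remaining $m\in[n]$. This proves (\ref{usefuleq6}). The main obstacle is verifying that the four bang-per-buck intervals remain strictly separated for every price vector in $[1,2]^{2n+2}$; this is precisely what the factor-of-three gaps among the slope parameters $27,9,3,1$ were designed for, and it is also why the lemma would fail without the price-regulation bound of Lemma~\ref{bounded}.
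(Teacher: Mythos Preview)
Your proposal is correct and follows essentially the same route as the paper. Both arguments invoke Lemma~\ref{bounded} to pin prices in $[1,2]$, use the factor-of-three slope gaps $27,9,3,1$ to obtain a strict bang-per-buck ordering independent of $\pp$, read off the ceilings $a_{\uu,n+k}\le D_k/n^5$, $a_{\uu,2n+1}\le F/n^5$, $a_{\uu,i}\le 1/n^4$ (and symmetrically for $\uu'$), and then sum. One small remark: the paper actually argues that the slope-$27$ segments are filled \emph{exactly} (because the budget $\Omega(1/n^4)$ dwarfs their cost $O(1/n^5)$), and this equality is reused later in the proof of Property~\ref{finalprope2}; your inequality-only version suffices for the present lemma but you would need the sharper statement downstream.
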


\begin{lemm}\label{hahalemma2}
Let $\vv=(i,j,2)$ be a triple in $V$ and
  $\vv'=(j,i,2)\in V$.
Then for any $k\in [2n+1]$, we have\vspace{-0.08cm}
$$
w_{\vv,k}+w_{\vv',k}\ge a_{\vv,k}+a_{\vv',k}.\vspace{0.1cm}
$$
\end{lemm}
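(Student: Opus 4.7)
The plan is to mirror the argument for Lemma \ref{hahalemma1}, adapting it to the ``column'' traders $T_\vv$ that encode the payoff matrix $\BB$. Lemma \ref{bounded} already guarantees every price $p_k$ lies in $[1,2]$, which is what makes all the bang-per-buck comparisons below unconditional.

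\textbf{Step 1: Bang-per-buck hierarchy.} Fix $\vv=(i,j,2)$. Dividing the leading slope of each of $T_\vv$'s PLC functions by $p_k \in [1,2]$ gives four strictly separated tiers: (i) the stage-1 goods (those $G_k$, $k\in[n]$, with $D_k>0$, plus $G_{2n+1}$ if $F>0$) have bang-per-buck in $[27/2, 27]$; (ii) the own good $G_{n+i}$ lies in $[9/2, 9]$ up to its breakpoint $1/n^4$; (iii) $G_{2n+2}$ lies in $[3/2, 3]$; and (iv) every tail segment is at most $1$. Since $27/2 > 9 > 3 > 1$, every optimal bundle for $T_\vv$ must first fill all stage-1 caps (i.e., $D_k/n^5$ of each stage-1 $G_k$ and $F/n^5$ of $G_{2n+1}$), then cap $G_{n+i}$ at $1/n^4$, and only then spend on $G_{2n+2}$ or any tail.

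\textbf{Step 2: Budget bookkeeping.} Let $B_\vv := \ww_\vv \cdot \pp$. Using the construction identity $E + \sum_k C_k = F + \sum_k D_k$, one checks
\[
B_\vv - (\text{stage-1 cost}) - (\text{stage-2 cost}) \;=\; \tfrac{1}{n^5}\sum_{k\in[n]}(C_k-D_k)(p_k - p_{2n+1}) \;=:\; \delta_\vv,
\]
whose absolute value is at most $40/n^5$. Since $|\delta_\vv| \ll 1/n^4 \le p_{n+i}/n^4 = \text{stage-2 cost}$, the residual $B_\vv - (\text{stage-1 cost}) = p_{n+i}/n^4 + \delta_\vv$ is positive, so $T_\vv$ always completes stage 1 in full. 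Swapping $i \leftrightarrow j$ interchanges $\CC \leftrightarrow \DD$ and $E \leftrightarrow F$, whence $\delta_{\vv'} = -\delta_\vv$.

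\textbf{Step 3: Pairing $\vv$ with $\vv'$.} Assume without loss of generality $\delta_\vv \ge 0$. Then $T_\vv$ completes both stages and purchases $\delta_\vv/p_{2n+2}$ units of $G_{2n+2}$, while $T_{\vv'}$ completes stage 1 in full but buys only $1/n^4 - \delta_\vv/p_{n+j}$ of $G_{n+j}$ and zero of $G_{2n+2}$. Summing the two allocations: for $k \in [n]$, $a_{\vv,k}+a_{\vv',k} = (D_k+C_k)/n^5 = w_{\vv,k}+w_{\vv',k}$; at $k=2n+1$, $(F+E)/n^5 = w_{\vv,2n+1}+w_{\vv',2n+1}$; at $k \in \{n+i, n+j\}$, the combined purchase is at most $1/n^4 = w_{\vv,k}+w_{\vv',k}$; and for $k = n+\ell$ with $\ell \notin \{i,j\}$, both PLC slopes are zero, giving $0 \le 0$. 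This yields the inequality at every $k \in [2n+1]$.

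The main obstacle is ensuring strict separation of the four tiers at the worst-case boundary $p_k=2$, which is exactly what the constants $27, 9, 3, 1$ in the construction were chosen to enforce; once this is secured, the pairing step becomes purely algebraic bookkeeping, mirroring the proof of Lemma \ref{hahalemma1} almost verbatim (with the own good $G_{n+i}$ of $T_\vv$ taking the role of $G_i$ for $T_\uu$, and the goods $G_k$, $k\in[n]$, taking the role of $G_{n+k}$).
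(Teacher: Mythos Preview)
Your proof is correct and follows essentially the same symmetric argument the paper intends: the price-regulation bound $p_k\in[1,2]$ forces the strict bang-per-buck ordering $27>9>3>1$, so each $T_\vv$ first fills the capped ``$27$-goods'' exactly, then buys at most $1/n^4$ of its own good $G_{n+i}$, and the pairing $\vv\leftrightarrow\vv'$ swaps $\CC\leftrightarrow\DD$, $E\leftrightarrow F$, yielding the coordinate-wise inequality on $[2n+1]$. Your explicit residual computation $\delta_\vv=-\delta_{\vv'}$ is a bit more than the lemma itself requires (the paper only uses $a_{\vv,n+i}\le 1/n^4$ here and defers that exact bookkeeping to the proof of Property~\ref{finalprope1}), but it is correct and in the same spirit.
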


\begin{proof}[Proof of Lemma \ref{hahalemma1}]
Without loss of generality, we only need to prove Lemma \ref{hahalemma1}
  for the case when $\uu=(1,2,1)$ and $\uu'=(2,1,1)$.
Let $\CC$ and $\DD$ denote the following two $n$-dimensional vectors: For $k\in [n]$,
\begin{equation}\label{yaya1}
(C_k,D_k)=(A_{1,k}-A_{2,k},0)\ \text{if}\ A_{1,k}-A_{2,k}\ge 0;\ \text{and}\
(C_k,D_k)=(0,A_{2,k}-A_{1,k})\ \text{otherwise}.
\end{equation}
We also define $E$ and $F$ to be the following two nonnegative numbers:\vspace{0.04cm}
\begin{equation}\label{yaya2}
(E,F)=\left(\sum_{k\in [n]}D_k-\sum_{k\in [n]}C_k,0\right)\ \text{if}\ \sum_{k\in [n]}
  D_k\ge \sum_{k\in [n]} C_k;
\ (E,F)=\left(0,\sum_{k\in [n]}C_k-\sum_{k\in [n]}D_k\right)\ \text{otherwise}.\vspace{0.06cm}
\end{equation}
Then by the construction, we have $w_{\uu,n+k}=C_k/n^5$ and
  $w_{\uu',n+k}=D_k/n^5$ for all $k\in [n]$,
$$w_{\uu,1}= w_{\uu',2}=1/n^4,\ \ \ w_{\uu,2n+1}=E/n^5,\ \ \
  w_{\uu',2n+1}=F/n^5,$$ and all other entries of $\ww_\uu$
  and $\ww_{\uu'}$ are $0$.

We now focus on the preference of $T_\uu$.
After selling its initial endowment, the budget of $T_\uu$ is
$$
p_1\cdot \frac{1}{n^4}+\sum_{k\in [n]}
  p_{n+k}\cdot \frac{C_k}{n^5}+p_{2n+1}\cdot \frac{E}{n^5}=\Omega\left(\frac{1}{n^4}\right)
$$
by Lemma \ref{bounded}.
The PLC utility functions $r_{\uu,k}(\cdot)$ of $T_\uu$ are designed carefully,
  so that even though we do not know what exactly $\pp$ is,
  we know the behavior of $T_\uu$ due to the price-regulation property:
$T_\uu$ first buys the following bundle of goods from the market
\begin{equation}\label{bundle}
\Big\{\frac{D_k}{n^5}\ \text{amount of $G_{n+k}$ and $\frac{F}{n^5}$
  amount of $G_{2n+1}$}:k\in [n]\Big\}.
\end{equation}
As $\DD$ has at most $20$ nonzero entries and every entry
  is between $0$ and $2$, the cost of this bundle is
$$
\sum_{k\in [n]} p_{n+k}\cdot \frac{D_k}{n^5}+p_{2n+1}\cdot
  \frac{F}{n^5}=O\left(\frac{1}{n^5}\right)\ll \frac{1}{n^4}.
$$
$T_\uu$ then buys as much $G_1$ as it can up to $1/n^4$, and
  spends all the money left, if any, on $G_{2n+2}$.\vspace{0.01cm}

The behavior of $T_{\uu'}$ is similar.
It first buys the following bundle of goods from the market:
\begin{equation}\label{bundle2}
\Big\{\frac{C_k}{n^5}\ \text{amount of $G_{n+k}$ and $\frac{E}{n^5}$
  amount of $G_{2n+1}$}:k\in [n]\Big\}.
\end{equation}
It then buys as much $G_2$ as it can up to $1/n^4$, and spends all
  the money left, if any, on $G_{2n+2}$.\vspace{0.01cm}

Now we are ready to prove the lemma. The case when $k\in [n]$ but $k\ne 1,2$ is
  trivial since $$w_{\uu,k}=w_{\uu',k}=a_{\uu,k}=a_{\uu',k}=0.$$
When $k=1$, we have $w_{\uu,1}+w_{\uu',1}=1/n^4$,
  $a_{\uu',1}=0$, $a_{\uu,1}\le 1/n^4$ and thus, (\ref{usefuleq6}) follows.
The case when $k=2$ can be proved similarly.
For the case of $n+k$ where $k\in [n]$, we have
$$
w_{\uu,n+k}=\frac{C_k}{n^5},\ \ \ w_{\uu',n+k}=\frac{D_k}{n^5},\ \ \
a_{\uu,n+k}=\frac{D_k}{n^5},\ \ \ \text{and}\ \ \ a_{\uu',n+k}=\frac{C_k}{n^5},
$$
and (\ref{usefuleq6}) follows. When $k=2n+1$, we have
$$
w_{\uu,2n+1}=\frac{E}{n^5},\ \ \ w_{\uu',2n+1}=\frac{F}{n^5},\ \ \
a_{\uu,2n+1}=\frac{F}{n^5},\ \ \ \text{and}\ \ \ a_{\uu',2n+1}=\frac{E}{n^5},
$$
and (\ref{usefuleq6}) follows. This finishes the proof of the lemma.\vspace{0.05cm}
\end{proof}

By Lemma \ref{hahalemma1}, Lemma \ref{hahalemma2} and Lemma \ref{kaka1},
  we immediately get the following corollary concerning $p_{2n+1}$.\vspace{0.05cm}

\begin{coro}\label{coro1}
$p_{2n+1}=1$.
\end{coro}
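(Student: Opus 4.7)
The plan is to derive Corollary~\ref{coro1} by applying Lemma~\ref{kaka1} to the good $G_{2n+1}$, so the entire task reduces to verifying the hypothesis $w_{2n+1}[\calT_2] - a_{2n+1}[\calT_2] > 3/n^{13}$. I would partition $\calT_2$ into the three natural pieces $\{T_\uu:\uu\in U\}$, $\{T_\vv:\vv\in V\}$, and $\{T_i:i\in[2n]\}$, bound the net flow of $G_{2n+1}$ out of each piece separately, and add them.

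For the $U$-traders, pair each $\uu=(i,j,1)\in U$ with $\uu'=(j,i,1)\in U$; every element of $U$ sits in exactly one such unordered pair. Applying Lemma~\ref{hahalemma1} at $k=2n+1$ to each pair and summing gives $\sum_{\uu\in U} a_{\uu,2n+1}\le\sum_{\uu\in U} w_{\uu,2n+1}$, a non-negative contribution to $w_{2n+1}[\calT_2]-a_{2n+1}[\calT_2]$. Lemma~\ref{hahalemma2} yields the analogous non-negative contribution from the $V$-traders. For the $T_i$ traders ($i\in[2n]$), each one owns only $G_{2n+1}$ in the amount $1/n^{12}$ and her utility function $r_{i,i}(\cdot)=[1]$ depends solely on $G_i$ (all other $r_{i,k}$ are zero). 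Since $p_k\ge 1$ for every $k$ by Lemma~\ref{bounded}, her unique optimal choice is to spend her full budget on $G_i$, so $a_{i,2n+1}=0$. Summing yields a contribution of exactly $2n\cdot n^{-12}=2/n^{11}$.

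Combining, $w_{2n+1}[\calT_2]-a_{2n+1}[\calT_2]\ge 2/n^{11}$, which is strictly greater than $3/n^{13}$ for every $n\ge 2$. Lemma~\ref{kaka1} then delivers $p_{2n+1}=1$. The only mildly subtle point is the claim that an optimal bundle for $T_i$ forces $a_{i,2n+1}=0$; but this follows because buying any positive amount of a good with zero marginal utility at a strictly positive price would strictly reduce the amount of $G_i$ she could afford, contradicting optimality.
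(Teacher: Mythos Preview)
Your proof is correct and follows essentially the same route as the paper: use Lemmas~\ref{hahalemma1} and \ref{hahalemma2} (summed over the $(\uu,\uu')$ and $(\vv,\vv')$ pairs) to see that the $U$- and $V$-traders contribute non-negatively to $w_{2n+1}[\calT_2]-a_{2n+1}[\calT_2]$, observe that the $T_i$ traders contribute exactly $2/n^{11}$ because each spends her entire budget on $G_i$ and nothing on $G_{2n+1}$, and then invoke Lemma~\ref{kaka1}. The only difference is that you spell out why $a_{i,2n+1}=0$ (positive prices plus zero marginal utility on $G_{2n+1}$), which the paper leaves implicit.
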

\begin{proof}
First, by Lemma \ref{hahalemma1} and Lemma \ref{hahalemma2},
  we have
$$
w_{2n+1}[T_\uu,T_\vv:\uu\in U,\vv\in V]-a_{2n+1}[T_\uu,T_\vv:\uu\in U,\vv\in V]\ge 0.
$$
However, the construction implies that
$$
w_{2n+1}\big[T_i:i\in [2n]\big]=2n\cdot \frac{1}{n^{12}}=
\frac{2}{n^{11}}\ \ \ \ \text{and}\ \ \ \ a_{2n+1}\big[T_i:i\in [2n]\big]=0.
$$
As a result, $w_{2n+1}[\calT_2]-a_{2n+1}[\calT_2]\ge 2/n^{11}\gg 3/n^{13}.$
It then follows from Lemma \ref{kaka1} that $p_{2n+1}=1$.
\end{proof}

\subsection{Proof of Theorem \ref{mainhaha}}

Now we let $\xx'$ and $\yy'$ denote the two vectors obtained
  in (\ref{usefuleq1}).
By Lemma \ref{bounded} we have $x_k',y_k'\in [0,1]$ for all $k\in [n]$.
We will prove the following two properties of $(\xx',\yy')$
  and use them to prove Theorem \ref{mainhaha}.\vspace{0.05cm}

\begin{prope}\label{finalprope1}
For all $1\le i\ne j\le n$, we have
\begin{eqnarray}\label{lateruse}
&(\AA_i-\AA_j)\yy'^T< -\epsilon\ \ \Longrightarrow\ \ x_i'=0;&\text{and}\\[0.3ex]
\label{uselater}&\xx'(\BB_i-\BB_j) < -\epsilon\ \ \Longrightarrow\ \ y_i'=0,
\end{eqnarray}
where $\epsilon=n^{-6}$, $\AA_i$ denotes the $i$\emph{th} row vector of $\AA$,
  and $\BB_i$ denotes the $i$\emph{th} column vector of $\BB$.\vspace{0.05cm}
\end{prope}

\begin{prope}\label{finalprope2}
There exist $i$ and $j\in [n]$ such that $x_i'=1$ and $y_j'=1$.\vspace{0.05cm}
\end{prope}

Now assume that $\xx'$ and $\yy'$ satisfy both properties. In particular,
  Property \ref{finalprope2} implies that
  $\xx',\yy'\ne \00$. As a result,
we can normalize them to get two probability distribution
  $\xx$ and $\yy$ using (\ref{usefuleq2}).
Before proving these two properties, we show that
  $(\xx,\yy)$ must be an $\epsilon$-well-supported Nash equilibrium of $(\AA,\BB)$.\vspace{0.05cm}

\begin{proof}[Proof of Theorem \ref{mainhaha}]
Since both $\xx$ and $\yy$ are probability distributions, we only need
  to show that $(\xx,\yy)$ satisfies (\ref{nasheq1}) and (\ref{nasheq2})
  for all $i,j:1\le i\ne j\le n$.
We only prove (\ref{nasheq1}) here.\vspace{0.01cm}

Assume $\AA_i\yy^T+\epsilon <\AA_j\yy^T$, then we have\vspace{-0.1cm}
$$
(\AA_i-\AA_j)\yy'^T= (\AA_i-\AA_j)\yy^T\cdot \left(\sum_{k\in [n]} y_k'\right)
<-\epsilon
$$
since $\sum_{k\in [n]}y_k'\ge 1$ by Property \ref{finalprope2}.
As a result, by Property \ref{finalprope1} we have $x_i'=0$ and thus, $x_i=0$.\vspace{0.04cm}
\end{proof}

Finally, we prove Property \ref{finalprope1} and Property \ref{finalprope2}.\vspace{0.04cm}

\begin{proof}[Proof of Property \ref{finalprope1}]
We only prove (\ref{lateruse}) for the case when $i=1$, $j=2$.
(\ref{uselater}) can be proved similarly.\vspace{0.01cm}

Let $\uu=(1,2,1)$ and $\uu'=(2,1,1)$.
Let $\CC$ and $\DD$ be the two nonnegative vectors defined in (\ref{yaya1}),
  and $E$ and $F$ be the two nonnegative numbers defined in (\ref{yaya2}).
We have
\begin{equation}\label{yuyu1}
\CC-\DD=\AA_1-\AA_2\ \ \ \ \text{and}\ \ \ \ E+\sum_{k\in [n]}C_k=F+\sum_{k\in [n]}D_k.
\end{equation}

Assume $(\AA_1-\AA_2)\yy'^T<-\epsilon$.
Then the money of $T_\uu$ left after purchasing the bundle in (\ref{bundle}) is
$$
p_1\cdot \frac{1}{n^4}+\sum_{k\in [n]} p_{n+k}\cdot \frac{C_k}{n^5}+p_{2n+1}\cdot \frac{E}{n^5}
  -\sum_{k\in [n]}p_{n+k}\cdot \frac{D_k}{n^5}-p_{2n+1}\cdot \frac{F}{n^5}.
$$
By Corollary \ref{coro1}, we have $p_{2n+1}=1$. Using (\ref{yuyu1}), we can simplify the
  equation to be the following:\vspace{0.04cm}
\begin{equation}\label{rrr}
p_1\cdot \frac{1}{n^4}+\frac{1}{n^5}\sum_{k\in [n]}y_k'\cdot (C_k-D_k)
=p_1\cdot \frac{1}{n^4}+\frac{1}{n^5}(\AA_1-\AA_2)\yy'^T<p_1\cdot \frac{1}{n^4}-
  \frac{\epsilon}{n^5}.
\end{equation}
This implies that the amount $a_{\uu,1}$ of $G_1$ that $T_\uu$
  buys is smaller than
$$
\frac{1}{n^4}-\frac{\epsilon}{p_1n^5}\le \frac{1}{n^4}-\frac{1}{2n^{11}}
$$
since $\epsilon=n^{-6}$.
However, we have $w_{\uu,1}=1/n^4$ and thus,
\begin{equation}\label{summary}
w_{\uu,1}-a_{\uu,1}>1/(2n^{11}).
\end{equation}

On the other hand, it is easy to check that $w_{\uu',1}=0$ and $a_{\uu',1}=0$.
By Lemma \ref{hahalemma1} and \ref{hahalemma2},
  we have
\begin{equation}\label{lastequation}
w_1[T_\uu,T_\vv:\uu\in U,\vv\in V]-a_1[T_\uu,T_\vv:\uu\in U,\vv\in V]>\frac{1}{2n^{11}}.
\end{equation}
Next we bound $w_1\big[T_i:i\in [2n]\big]-a_1\big[T_i:i\in [2n]\big]$.
By the construction, $a_1\big[T_i:i\in [2n],i\ne 1\big]=0$ and
$$
a_{1,1}=\frac{p_{2n+1}\cdot \frac{1}{n^{12}}}{p_1}\le \frac{1}{n^{12}},
$$
since $p_{2n+1}=1$. Therefore, $w_1\big[T_i:i\in [2n]\big]-a_1\big[T_i:i\in [2n]\big]\ge -1/n^{12}$.
Combining (\ref{lastequation}), we have
$$
w_1[\calT_2]-a_1[\calT_2]>\frac{1}{2n^{11}}-\frac{1}{n^{12}}\gg \frac{3}{n^{13}}.
$$
It then follows from Lemma \ref{kaka1} that $p_1=1$ and thus, $x_1'=0$.\vspace{0.06cm}
\end{proof}

\begin{proof}[Proof of Property \ref{finalprope2}]
Let $\ell\in [n]$ be one of the indices that maximizes $\AA_\ell\yy'^T$,
then we show that $x_\ell'=1$. Without loss of generality, we may assume that $\ell=1$.

First, we consider $\vv=(i,j,2)$ and $\vv'=(j,i,2)$ in $V$.
In the proof of Lemma \ref{hahalemma1}, we showed that\vspace{-0.06cm}
$$w_{\uu,n+k}+w_{\uu',n+k}=a_{\uu,n+k}+a_{\uu',n+k},\vspace{-0.06cm}$$
{for all pairs $\uu=(i,j,1)$ and $\uu'=(j,i,1)$, and all $k\in [n]$.}
Similarly, we can prove that\vspace{-0.06cm}
\begin{equation}\label{symmetry}
w_{\vv,1}+w_{\vv',1}=a_{\vv,1}+a_{\vv',1}.\vspace{-0.06cm}
\end{equation}

Second, for every $\uu=(i,j,1)\in U$, we always have $w_{\uu,1}=a_{\uu,1}$.
  This is because
\begin{enumerate}
\item If $i\ne 1$, then $w_{\uu,1}=a_{\uu,1}=0$; and
\item If $i=1$, then by (\ref{rrr}), the money of $T_\uu$ left after
  purchasing the bundle of goods in (\ref{bundle})
  is at\\ least $p_1/n^4$, so $w_{\uu,1}=a_{\uu,1}=1/n^4$.
\end{enumerate}
As a result, we have $w_1[T_\uu,T_\vv:\uu\in U,\vv\in V]=a_1[T_\uu,T_\vv:\uu\in U,\vv\in V]$.

However, the amount of $G_1$ that $T_1$  buys is\vspace{-0.05cm}
$$
\frac{p_{2n+1}\cdot \frac{1}{n^{12}}}{p_1}\ge \frac{1}{2n^{12}}
$$
and thus, $w_1\big[T_i,i\in [2n]\big]-a_1\big[T_i:i\in [2n]\big]\le -1/(2n^{12})$.
Putting everything together, we have
$$w_1[\calT_2]-a_1[\calT_2]\le -\frac{1}{2n^{12}}\ll -\frac{3}{n^{13}}.$$
By Lemma \ref{kaka2}, we conclude that $p_1=2$ and thus, $x_1'=1$.\vspace{0.2cm}
\end{proof}

\bibliographystyle{plain}
\bibliography{MARKET}

\appendix
\section{Proof of Theorem \ref{existence}}\label{appendix1}

In this section, we prove Theorem \ref{existence}.
To this end, we first show that under the conditions of
  Theorem \ref{existence},
  $\calM$ has at least one \emph{quasi-equilibrium} (see the
  definition below).
Then we show that any quasi-equilibrium of $\calM$ is
  indeed a market equilibrium.\vspace{0.06cm}

\begin{defi} \label{quasi-equilibrium}
A quasi-equilibrium of $\calM$ is a \emph{(normalized)} price vector
  $\pp\in \mathbb{R}_+^n$ such that there
  exists an allocation $\mathcal{X}=\{\xx_i\in \mathbb{R}_+^n:i\in [m]\}$
  which has the following properties:
\begin{enumerate}
\item The market clears: For every good $G_j \in \mathcal{G}$,
\begin{equation*}
\sum_{i \in [m]} x_{i,j}\le \sum_{i \in [m]} w_{i,j};
\end{equation*}
In particular, if $p_j>0$, then\vspace{-0.1cm}
\begin{equation*}
\sum_{i \in [m]} x_{i,j}=\sum_{i \in [m]} w_{i,j};
\end{equation*}
\item For every trader $T_i\in \calT$, at least one of the
  following is true:
\begin{enumerate}
\item
  $\xx_i \in \text{\sc OPT}(i,\pp)$;
\item $\pp \cdot \xx_i =
  \pp \cdot \ww_{i}=0$ \emph{(}zero income\emph{)}.\vspace{0.06cm}
\end{enumerate}
\end{enumerate}
\end{defi}

The difference between market equilibria and quasi-equilibria
  is that in the latter, we \emph{do not} require the optimality of
  allocations for traders with a zero income: If a trader
  has a zero income, then we can assign her any bundle of zero cost.
However, if $\pp$ is a quasi-equilibrium and
  the income of every trader is positive with respect to $\pp$, then
  by definition $\pp$ must be a market equilibrium.\vspace{0.01cm}

In \cite{Maxfield} Maxfield gave a set of conditions that are sufficient
  for the existence of a quasi-equilibrium in an exchange market.
We use the following simplified version \cite{Maxfield}:\vspace{0.04cm}

\begin{theo}[\cite{Maxfield}] \label{quasi-existence}
An exchange market $\calM$ has a quasi-equilibrium $\pp$ if \begin{enumerate}
\item For each trader $T_i\in \calT$,
  its utility function $u_i:\mathbb{R}_+^n\rightarrow \mathbb{R}$
  is both continuous and quasi-concave; and\vspace{-0.08cm}
\item For each trader $T_i \in \calT$,
  $u_i$ is non-satiable, i.e., for any $\xx \in
  \mathbb{R}_+^n$, there exists a vector $\yy\in \mathbb{R}_+^n$ such that
  $u_i(\yy) > u_i(\xx)$.\vspace{0.04cm}
\end{enumerate}
\end{theo}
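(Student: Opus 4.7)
The plan is the classical Arrow--Debreu/Kakutani fixed-point argument, adapted to allow zero incomes (which is what distinguishes a quasi-equilibrium from a genuine equilibrium). First I would normalize prices to live in the simplex $\Delta = \{\pp \in \mathbb{R}_+^n : \sum_j p_j = 1\}$, and truncate each trader's consumption set to a compact cube $K = [0,M]^n$ with $M$ strictly larger than the total endowment $\sum_i \ww_i$ coordinate-wise. This truncation is only a technical device to make demands compact-valued; I would verify at the end that any fixed point lies in the interior of the truncation so the bound is never binding.

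Next I would build the usual product correspondence on $\Delta \times K^m$. For the trader coordinates, I would use the truncated budget-constrained demand correspondence
\[
D_i(\pp) \;=\; \operatorname{argmax}\{\, u_i(\xx) : \xx \in K,\ \pp \cdot \xx \le \pp \cdot \ww_i \,\}.
\]
Continuity and quasi-concavity of $u_i$ (hypothesis 1) give that $D_i$ has non-empty, compact, convex values. For the price coordinate, I would use the ``Gale--Nikaido'' style rule $\Phi(\xx_1,\dots,\xx_m) = \operatorname{argmax}_{\pp \in \Delta} \pp \cdot \sum_i (\xx_i - \ww_i)$, which pushes price onto the most over-demanded goods and is trivially convex-valued and upper hemicontinuous. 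Applying Kakutani to the product map on $\Delta \times K^m$ yields a fixed point $(\pp^*, \xx_1^*,\dots,\xx_m^*)$.

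The core work is then to read off quasi-equilibrium properties from the fixed point. By Walras' law $\pp^* \cdot \sum_i(\xx_i^* - \ww_i) \le 0$, and from the definition of $\Phi$ together with this inequality one gets market clearing coordinate-wise: $\sum_i x^*_{i,j} \le \sum_i w_{i,j}$ for all $j$, with equality whenever $p^*_j > 0$. For each trader, $\xx_i^* \in D_i(\pp^*)$ means $\xx_i^*$ maximizes $u_i$ on the truncated budget set; the non-satiation hypothesis (hypothesis 2) is then used in a standard way to upgrade this to optimality on the full budget set whenever the trader has positive income, which gives condition 2(a) of Definition \ref{quasi-equilibrium}. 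If instead $\pp^* \cdot \ww_i = 0$, then the budget set is $\{\xx : \pp^* \cdot \xx = 0\}$ and the fixed point automatically satisfies 2(b).

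The main obstacle, and the reason the conclusion is only a quasi-equilibrium, is the trader with zero income at $\pp^*$. At such a price, $D_i(\pp^*)$ is upper hemicontinuous but the optimality over the full (unbounded) budget set need not follow from maximization on $K$ alone, and more importantly one cannot rule out $\pp^* \cdot \ww_i = 0$ without extra structural assumptions (in particular, this is precisely where one would later invoke the strong connectivity of the economy graph to promote the quasi-equilibrium to a genuine equilibrium, as the theorem statement in the paper's main body suggests). A secondary technical point is ensuring the truncation is slack at the fixed point: since $\sum_i \xx_i^* \le \sum_i \ww_i$ coordinate-wise and $M$ is chosen larger, each $\xx_i^*$ lies in the interior of $K$, so the truncated optimum is also a genuine optimum of the original budget problem, completing the proof.
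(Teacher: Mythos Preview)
The paper does not give its own proof of Theorem~\ref{quasi-existence}; it is quoted as a result of Maxfield~\cite{Maxfield} and used as a black box in the proof of Theorem~\ref{existence} in Appendix~\ref{appendix1}. So there is nothing in the paper to compare your argument against.

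Your sketch follows the classical Debreu/Gale--Nikaido route and has the right overall architecture, but there is one genuine gap. You assert that continuity and quasi-concavity of $u_i$ make $D_i$ suitable for Kakutani, but you only list ``non-empty, compact, convex values'' and never address upper hemicontinuity. The issue is that the truncated budget correspondence $\pp \mapsto \{\xx \in K : \pp\cdot\xx \le \pp\cdot\ww_i\}$ fails to be lower hemicontinuous at prices with $\pp\cdot\ww_i = 0$ (e.g.\ with $\ww_i=(1,0)$, $\pp^0=(0,1)$, $\pp^k=(1/k,1)$, the point $(M,0)$ is feasible at $\pp^0$ but cannot be approximated by feasible points along $\pp^k$). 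Hence Berge's maximum theorem does not apply there, and $D_i$ need not be upper hemicontinuous, so Kakutani cannot be invoked on the product correspondence as written. You locate the zero-income difficulty in your ``main obstacle'' paragraph \emph{after} the fixed point is obtained, but it actually bites earlier, at the Kakutani step itself.

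The standard repairs are either (i) to replace $D_i$ by a quasi-demand correspondence that, at zero income, returns the entire zero-cost face of $K$ rather than the $u_i$-optimal subset of it, or (ii) to add $\epsilon>0$ to every income, obtain a fixed point, and pass to a limit as $\epsilon\to 0$. Either route restores upper hemicontinuity and is precisely what produces clause~2(b) of Definition~\ref{quasi-equilibrium} (mere budget balance at zero income) instead of full optimality. Apart from this, your truncation-slackness and market-clearing arguments are fine.
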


Now we use Theorem \ref{quasi-existence} to prove Theorem \ref{existence}.

\begin{proof}[Proof of Theorem \ref{existence}]
First, it is easy to check that if $\calM$ is an additively separable
  PLC market that is strongly connected, then it satisfies both conditions
  in Theorem \ref{quasi-existence}.
In particular, $u_i$ is non-satiable since the economy graph of $\calM$ is
  strongly connected and thus, there exists a $j\in [n]$
  such that $r_{i,j}(\cdot)$ is strictly monotone.
As a result, $\calM$ must have a quasi-equilibrium $\pp$.
We use $\calX=\{\xx_i\in \mathbb{R}_+^n:i\in [m]\}$ to
  denote an allocation that clears the market.
Since $\pp\ne \00$, there is at least one trader in $\calT$,
  say $T_1\in \calT$, has a positive income.

Second, we show that for every trader, its income is positive and
  thus, $\pp$ is indeed an equilibrium of $\calM$.
Suppose this is not true, then there is at least one trader $T_2$
  whose income is zero.
Since the economy graph is strongly connected,
  there is a directed path from $T_2$ to $T_1$.
As a result, there must be a directed edge $T_3T_4$
  on the path such that the income of $T_3$ is zero and
  the income of $T_4$ is positive.
By definition, there exists a $j\in [n]$ such that the amount
  of $G_j$ that $T_3$ owns at the beginning is positive
  and the PLC utility function of $T_4$ with respect to $G_j$ is strictly monotone.
However, since the income of $T_3$ is zero, we have $p_j=0$ and
  thus, the amount of $G_j$ that $T_4$ wants to buy is $+\infty$,
  contradicting the assumption that $\pp$ is a quasi-equilibrium of $\calM$
  (since the income of $T_4$ is positive but the bundle she
  receives is not optimal).\vspace{0.01cm}


Now we have proved the existence of a market equilibrium $\pp$.
The second part of Theorem \ref{existence} follows from the work of
  Devanur and Kannan \cite{Nikhil}.
In \cite{Nikhil}, the authors proposed an algorithm for computing a market
  equilibrium in an additively separable PLC market\hspace{0.04cm}\footnote{When
  the number of goods is constant, the algorithm is polynomial-time.}.
They divide the whole search space $\mathbb{R}_+^n$ of $\pp$ into ``cells''
  $C\subset \mathbb{R}_+^n$ using hyperplanes.
Then for each cell $C$, there is a rational linear program LP$_C$ that
  characterizes the set of market equilibria in $C$:
  $\pp\in C$ is an equilibrium of $\calM$ if and only if
  it is a feasible solution to LP$_C$ (In particular, if LP$_C$ has no feasible solution then there
  is no equilibrium in $C$).
Moreover, the size of LP$_C$, for any cell $C$,
  is polynomial in the size of $\calM$.\vspace{0.01cm}

Now let $\pp$ be a market equilibrium of $\calM$, which
  is not necessarily rational.
We let $C^*$ denote the cell that $\pp$ lies in, then
  $\pp$ must be a feasible solution to LP$_{C^*}$.
Since LP$_{C^*}$ is rational, it must have a rational solution
  $\pp^*$ and the number of bits one need to describe $\pp^*$
  is polynomial in the size of LP$_{C^*}$ and thus, is polynomial
  in the size of $\calM$.
Theorem \ref{existence} then follows since $\pp^*$ is also an equilibrium of $\calM$.
\end{proof}
\end{document}